\newcolumntype{L}[1]{>{\raggedright\let\newline\\\arraybackslash\hspace{0pt}}m{#1}}
\newcolumntype{C}[1]{>{\centering\let\newline\\\arraybackslash\hspace{0pt}}m{#1}}
\newcolumntype{R}[1]{>{\raggedleft\let\newline\\\arraybackslash\hspace{0pt}}m{#1}}
\newcommand{\rpc}[1]{}
\newcommand{\trung}[1]{}
\newcommand{\dumi}[1]{}
\newcommand{\revision}[1]{{#1}}
\newcommand{\fabricSharp}{FabricSharp}
\newcommand{\ffabricSharp}{FastFabricSharp}
\newcommand{\fabricPlusplus}{Fabric++}
\newcommand{\entry}[3]{\ensuremath{(\texttt{#1}, (#2), #3)}}
\newcommand{\dbentry}[4]{\ensuremath{\{{\texttt{#1}}\_{#2}\_{#3}: \texttt{#4}\}}}
\newcommand{\startTS}[1]{\ensuremath{\texttt{StartTs}(\texttt{#1})}}
\newcommand{\commitTS}[1]{\ensuremath{\texttt{EndTs}(\texttt{#1})}}
\newcommand{\scenario}[1]{\textit{#1}}
\def\unioneq{\ensuremath{\mathrel{{\cup}{=}}}}
\def\xmark{\ensuremath{\texttt{x}}}
\def\vmark{\ensuremath{\checkmark}}
\newtheorem{theorem}{Theorem}
\newtheorem{proposition}{Proposition}
\newtheorem{definition}{Definition}
\newtheorem{lemma}{Lemma}
\author{
  Pingcheng Ruan \\
  National University of Singapore \\
  \texttt{ruanpc@comp.nus.edu.sg} \\
   \And
  Dumitrel Loghin \\
  National University of Singapore \\
  \texttt{dumitrel@comp.nus.edu.sg} \\
   \And
  Quang-Trung Ta \\
  National University of Singapore \\
  \texttt{taqt@comp.nus.edu.sg} \\
   \And
  Meihui Zhang \\
  Beijing Institute of Technology \\
  \texttt{meihui\_zhang@bit.edu.cn} \\
   \And
  Gang Chen \\
  Zhejiang University \\
  \texttt{cg@zju.edu.cn} \\
   \And
  Beng Chin Ooi \\
  National University of Singapore \\
  \texttt{ooibc@comp.nus.edu.sg} 
}
\title{A Transactional Perspective on Execute-order-validate Blockchains}
\begin{document}
\maketitle

\begin{abstract}

Smart contracts have enabled blockchain systems to evolve from simple
cryptocurrency platforms, such as Bitcoin, to general transactional systems,
such as Ethereum.
Catering for emerging business requirements, a new architecture called
execute-order-validate has been proposed in Hyperledger Fabric to support
parallel transactions and improve the blockchain's throughput.
However, this new architecture might render many invalid transactions when
serializing them.
This problem is further exaggerated as the block formation rate is inherently
limited due to other factors beside data processing, such as
cryptography and consensus.

In this work, we propose a novel method to enhance the execute-order-validate
architecture, by reducing invalid transactions to improve the throughput of
blockchains.
Our method is inspired by state-of-the-art optimistic concurrency control
techniques in modern database systems.
In contrast to existing blockchains that adopt database's preventive approaches
which might abort serializable transactions,
our method is theoretically more fine-grained. Specifically, unserializable transactions
are aborted before ordering and the remaining transactions are guaranteed to be
serializable.
\revision{For evaluation, we implement our method in two blockchains respectively, {\fabricSharp} on top of
Hyperledger Fabric, and {\ffabricSharp} on top of FastFabric. 
}
We compare the performance of {\fabricSharp} with vanilla Fabric and three related systems, two of which are respectively implemented with one standard and one state-of-the-art concurrency control techniques from databases.
The results demonstrate that {\fabricSharp} achieves \revision{25\%} higher throughput compared to the other systems in nearly all experimental scenarios.
\revision{Moreover, the {\ffabricSharp}'s improvement over FastFabric is up to 66\%. 
}
%


\end{abstract}

\keywords{Optimistic Concurrency Control, Blockchain, Execute-Order-Validate, Transaction, Database}


\section{Introduction}
\label{sec:intro}

Blockchains have stricken the world like a storm.
The concept of \textit{blockchain}, originating from Nakamoto's Bitcoin
whitepaper \cite{nakamoto2008bitcoin}, proposes to employ a hashed chain of
blocks to batch historical monetary transactions.
This chain is distributed across a network of mutually distrusting nodes that
run a \textit{proof-of-work} (PoW) consensus to consistently replicate the chain
and synchronize the state.
The consensus groups the transactions into blocks, and the nodes serially
execute the transactions in a block to update their state.
While Bitcoin only supports cryptocurrency operations, Ethereum was designed to
support \emph{Turing-complete} smart contracts that encode \textit{arbitrary}
data processing logic \cite{wood2014ethereum}.
With Ethereum, blockchains evolved from cryptocurrency platforms to distributed
transactional systems.

Blockchain systems can be classified into \textit{permissionless}
(\textit{public}), such as Bitcoin and Ethereum, and \textit{permissioned}
(\textit{private}), such as Hyperledger Fabric~\cite{androulaki2018hyperledger}.
In public blockchains, the data and transactional logic are transparent to the
public, hence, are subject to private data leakage.
Due to their openness, public blockchains use expensive PoW consensus.
This, together with the serial transaction execution limit these systems' capacity.
Addressing the limitations of public blockchains, Hyperledger Fabric is a
private blockchain that supports \emph{concurrent} transactions
\cite{androulaki2018hyperledger}.
A Fabric blockchain requires its members to enroll through a trusted membership service in order to interact with the blockchain.
In this paper, we focus on permissioned blockchains as they are more suitable
for supporting applications such as supply-chain, healthcare and resource
sharing, and in particular, we use Fabric as the underlying blockchain system.

Fabric supports a new transaction execution architecture called
execute-order-validate (EOV).
In this architecture, a transaction's lifecycle consists of three phases. In the
first phase, \textit{execution}, a client sends the transaction to a set of
nodes, or peers, specified by an endorsement policy.
The transaction is executed by these peers in parallel and its effects in terms
of read and written states are recorded.
Moreover, transactions from different clients may be parallelized during the
execution.
In the second phase, \textit{ordering}, a consensus protocol is used
to produce a totally ordered sequence of endorsed transactions grouped in
blocks.
This order is broadcast to all peers. In the third phase, \textit{validation},
each peer validates the state changes from the endorsed transactions with
respect to the endorsement policy and serializability.


The new EOV architecture limits the execution details of a transaction to the
endorsing peers to enhance confidentiality and exploit concurrency. But such
concurrency comes at the cost of aborting transactions that do not abide
serializability.
\revision{We evaluate the impact of concurrency control in Fabric on the setup described in \Cref{sec:experiment}.
We measure both the raw and effective peak throughputs under both no-op transactions, with no data access, and update transactions, with varying skewness controlled by the zipfian coefficient. 
The raw throughput represents the in-ledger transaction rate, while the effective throughput represents committed transactions by excluding the aborted  transactions from raw throughput. 
In \Cref{fig:intro}, a bar reports the raw throughput, while its blue part reports the effective throughput. 
The raw throughput is constant (677 tps) despite the workload type and request skewness. 
But with higher skewness, a larger proportion of transactions are aborted for serializability.
}
\begin{figure}
  \begin{minipage}{0.99\textwidth}
    \centering     
    \includegraphics[width=0.4\textwidth]{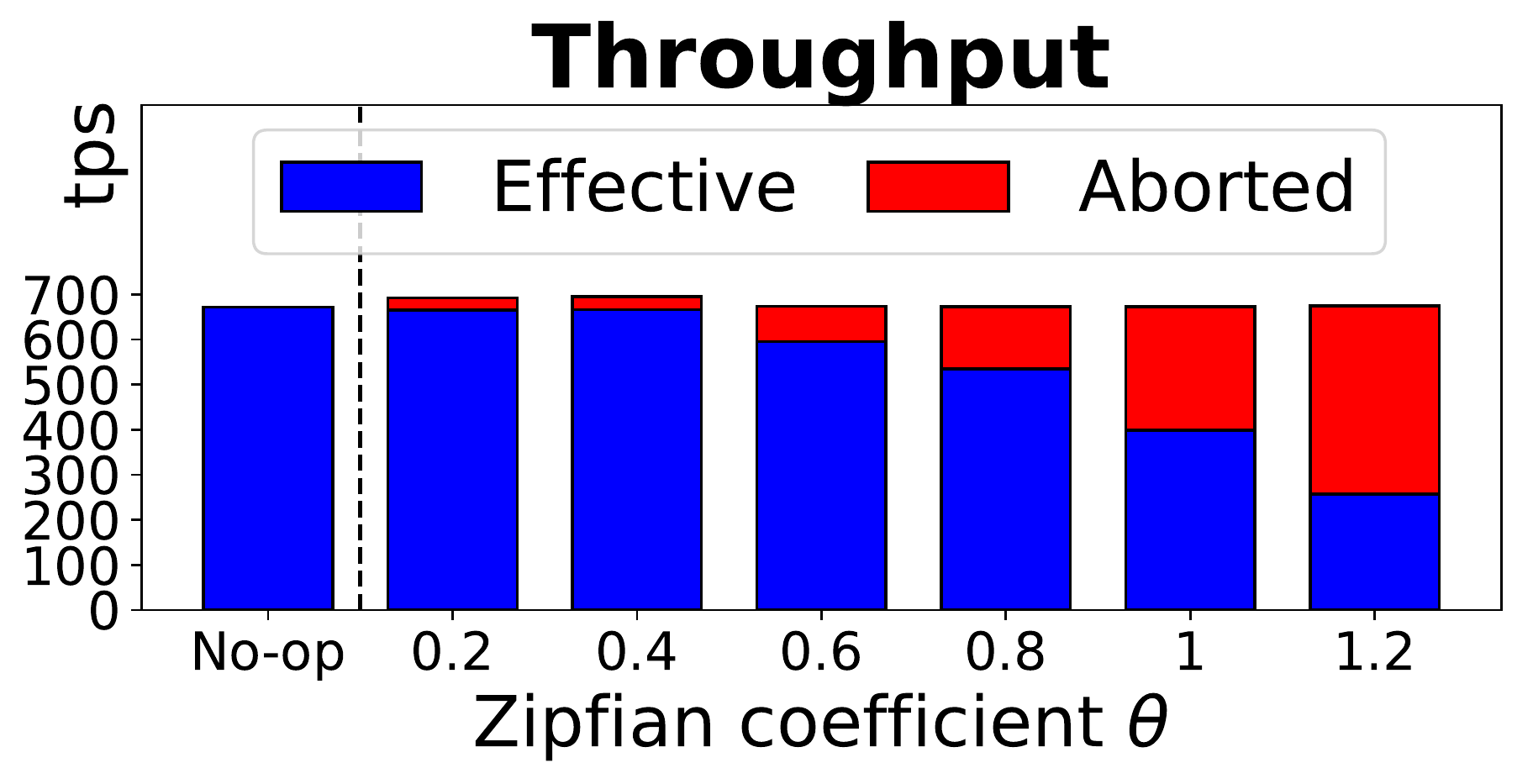}
  \end{minipage}

  \caption{Fabric's raw and effective throughput under both no-op transactions and single modification transactions with varying skewness}
  \label{fig:intro}

\end{figure}

There are two notable directions attempting to solve this issue.
The first is to improve upon Fabric's architecture to enhance its
attainable throughput \cite{nasir2018performance, thakkar2018performance, benhamouda2019supporting,web:fabricSharp}.
\revision{For example, FastFabric proposes to split a node's functionality to alleviate the bottleneck and achieves the highest throughput among all improvements of Fabric~\cite{gorenflo2019fastfabric}. 
}
However, these approaches are implementation-specific and might not generalize well
to other blockchains.
The second direction is to abstract out the transaction lifecycle to reduce
abort rate.
For example, {\fabricPlusplus}~\cite{sharma2019blurring} uses
well-established concurrency techniques from databases to early abort
transactions or reorder them to reconcile the potential conflicts.

Our work corresponds to the second direction, as a major attempt to
\textit{databasify} blockchains.
Here, we take a principled approach to learn from transactional analysis
techniques in databases with optimistic concurrency control (OCC) and apply them
to enhance transaction processing in blockchains.
We formally analyze the behavior of the current implementations of Fabric and {\fabricPlusplus}, and discover that both
achieve \textit{Strong Serializability} \cite{bailis2013highly} (as described in \Cref{sec:serializabilityAnalysis}).
In fact, these implementations are more stringent than
\textit{One-Copy Serializability} (or simply Serializability), as prescribed by the original Fabric protocol~\cite{androulaki2018hyperledger}.
Both systems employ a preventive approach which might over-abort transactions that are still serializable.
In contrast, our proposal consists of a novel reordering technique that
eliminates unnecessary abort due to in-ledger conflicts, with the serializability guarantee established upon our theoretical insights. 
\revision{Our approach does not change Fabric's architecture, therefore it is orthogonal to the aforementioned optimizations, such as FastFabric \cite{gorenflo2019fastfabric} and \cite{nasir2018performance, thakkar2018performance, benhamouda2019supporting,web:fabricSharp}. 
}

In summary, our paper makes the following contributions:

\begin{itemize}[leftmargin=1.8em]
\item We theoretically analyze the resemblance of transaction processing in
  blockchains with EOV architecture and databases with optimistic concurrency
  control (\Cref{sec:resembalance}).
  Based on this resemblance, we analyze the transactional behavior of
  state-of-the-art EOV blockchains, such as Fabric and {\fabricPlusplus}
  (\Cref{sec:serializabilityAnalysis}).
  
\item We propose a novel theorem to identify transactions that can never be
  reordered for serializability 
  (\Cref{sec:reorderabilityAnalysis}).
  Based on this theorem, we propose efficient algorithms to early filter out
  such transactions (\Cref{sec:concurrencyControl}), 
  \revision{with the guarantee of serializability for the remaining transactions after reordering.
  We also discuss the security implications of our proposal (\Cref{sec:securityanalysis}).
  }

\item We implement our proposed algorithms on top of two existing blockchains. First, we use Hyperledger Fabric v1.3 as the base and name our implementation \fabricSharp~(or Fabric\# for short). \revision{Second, we start from FastFabric~\cite{gorenflo2019fastfabric}, which obtains the highest throughput among all optimizations of Fabric, and name our implementation \ffabricSharp~(or FastFabric\# for short).}

\item We extensively evaluate {\fabricSharp} by comparing it with the vanilla
  Fabric, {\fabricPlusplus}, and two other implementations based on database concurrency control techniques from \revision{one standard approach~\cite{CahillRF08}} and a recent proposal by Ding et al~\cite{ding2018improving}.
  The experimental results show that the throughput of {\fabricSharp} is more than \revision{25\%} higher compared to the other systems.
  \revision{In addition, the {\ffabricSharp}'s improvement over FastFabric is up to 66\%.} 
\end{itemize}

The remaining of this paper is structured as follows.
\Cref{sec:background} provides background on EOV blockchains and OCC techniques. 
%
Our theoretical analysis follows in \Cref{sec:theory}, ending with our
reordering algorithm.
\Cref{sec:impl} describes the implementation of our approach.
\Cref{sec:experiment} reports our experimental results.
We review the related work in \Cref{sec:related}, before concluding in
\Cref{sec:conclusion}.



\begin{figure*}[h!]
  \begin{subfigure}{0.71\textwidth}
    \centering
    \includegraphics[width=0.99\textwidth]{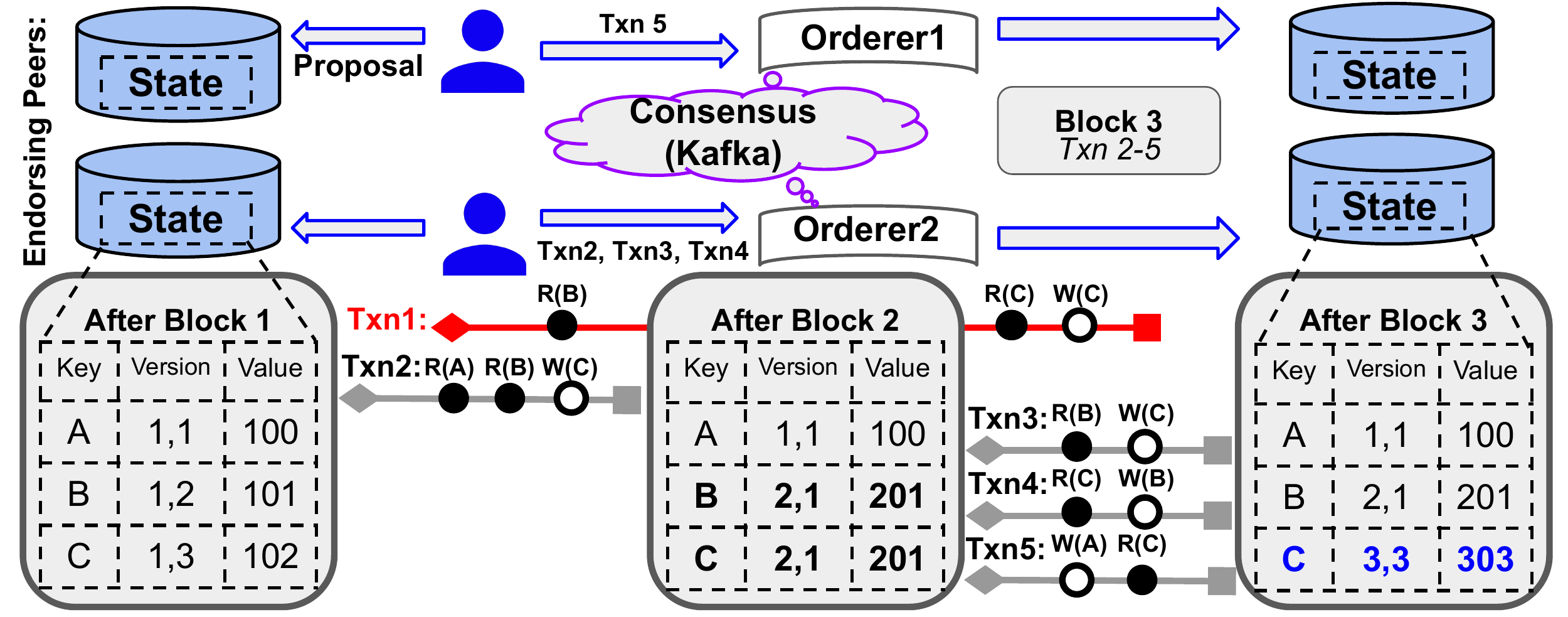}
    \caption{}
    \label{fig:background_fabric}
  \end{subfigure}
  \begin{subfigure}{0.28\textwidth}
    \centering
    \includegraphics[width=0.75\textwidth]{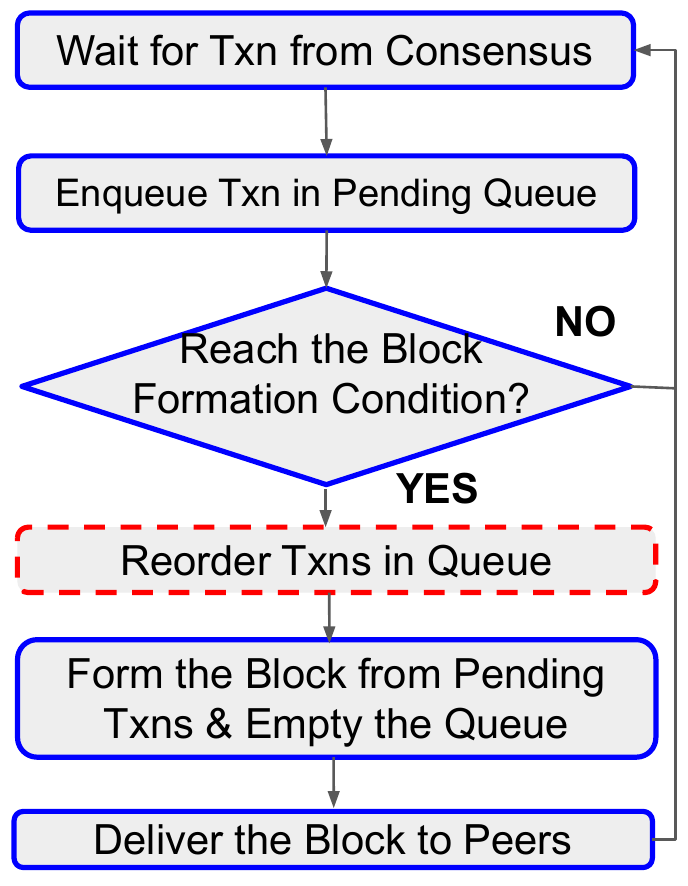}
    \caption{}
    \label{fig:background_orderer}
  \end{subfigure}

  \caption{\revision{(a) Example of transaction workflow in Fabric. An arrow represents the lifespan of a transaction's execution (simulation), e.g, Txn1 starts its execution immediately after block 1 and finishs its simulation after block 2. (b) Procedures replicated in each Fabric Orderer.
  {\fabricPlusplus} introduces a reordering step before the block formation to reduce the transaction abort rate.}}
  \label{fig:background}

\end{figure*}

\section{Background}
\label{sec:background}

\subsection{\revision{EOV~architecture~in~Fabric~and~{\fabricPlusplus}}}
\label{sec:background_fabric}

Hyperledger Fabric~\cite{androulaki2018hyperledger} is a state-of-the-art
permissioned blockchain that features a modular design based on the EOV architecture.
%
\revision{{\fabricPlusplus}~\cite{sharma2019blurring} is an optimization of Fabric, which reorders transactions after consensus to reduce the abort rate.
}
A Fabric/\revision{{\fabricPlusplus}} blockchain is run by a set of authenticated nodes, whose identity is
provided by a membership service.
A node in this blockchain has one of the following three roles:
(i) \textit{client} which submits a transaction proposal for execution,
(ii) \textit{peer} which \textit{executes} and \textit{validates}
transaction proposals,
or (iii) \textit{orderer} which \textit{orders} transactions and batches
them in blocks.
Transaction order is determined collectively by all orderers in
the blockchain based on a consensus protocol.
%

The state of a blockchain after forming a block is maintained by a
versioned key-value store.
Each entry in this store is a tuple (\texttt{key}, \texttt{ver},
\texttt{val}), where \texttt{key} is a unique name representing the entry, and
\texttt{ver} and \texttt{val} are the entry's latest version and value, respectively.
Moreover, \texttt{ver} is a pair consisting of the sequence number of the block
and the transaction that updated the entry.
%
%
For example, in \Cref{fig:background_fabric}, the entry \entry{C}{2,1}{201} in
the state after block $2$ indicates that the key \texttt{C} contains the latest
value $201$ which was lastly updated by the 1st transaction in block $2$.

\begin{table}[tp]
  \centering
  \caption{Summary of transactions in \Cref{fig:background}.
    Staled reads and installed writes are
    respectively marked in {\color{red} red} and {\color{blue} blue} colors. The
    symbols {\vmark}, {\xmark}, \textbf{N.A.} respectively indicate committed,
    aborted, or not-allowed transactions.}

  \label{tab:simulation}
  \vspace{4pt}
  \small
  \setlength\tabcolsep{2.3pt}
  \begin{tabular}{|c|c|cc|cc|c|l|c|l|c|l|}
  \hline
  \multicolumn{2}{|c|}{}
  & \multicolumn{2}{c|}{\textbf{Txn1}}
  & \multicolumn{2}{c|}{\textbf{Txn2}}
  & \multicolumn{2}{c|}{\textbf{Txn3}}
  & \multicolumn{2}{c|}{\textbf{Txn4}}
  & \multicolumn{2}{c|}{\textbf{Txn5}}
  \\

  \hline

  \multirow{2}{*}{\textbf{Readset}} & Key
  & B & C
  & A & {\color{red} \textbf{B}}
  & \multicolumn{2}{c|}{B}
  & \multicolumn{2}{c|}{{\color{red} \textbf{C}}}
  & \multicolumn{2}{c|}{{\color{red} \textbf{C}}}
  \\

  & Version
  & 1,2 & 2,1
  & 1,1 & {\color{red} \textbf{1,2}}
  & \multicolumn{2}{c|}{2,1}
  & \multicolumn{2}{c|}{{\color{red} \textbf{2,1}}}
  & \multicolumn{2}{c|}{{\color{red} \textbf{2,1}}}
  \\

  \hline

  \multirow{2}{*}{\textbf{Writeset}} & Key
  & \multicolumn{2}{c|}{C}
  & \multicolumn{2}{c|}{C}
  & \multicolumn{2}{c|}{{\color{blue} \textbf{C}}}
  & \multicolumn{2}{c|}{B}
  & \multicolumn{2}{c|}{A}
  \\

  & Value
  & \multicolumn{2}{c|}{301}
  & \multicolumn{2}{c|}{302}
  & \multicolumn{2}{c|}{{\color{blue} \textbf{303}}}
  & \multicolumn{2}{c|}{304}
  & \multicolumn{2}{c|}{305}
  \\

  \hline

  \multirow{2}{*}{\textbf{Commit status}} & Fabric
  & \multicolumn{2}{c|}{\textbf{N.A.}}
  & \multicolumn{2}{c|}{\xmark}
  & \multicolumn{2}{c|}{\vmark}
  & \multicolumn{2}{c|}{\xmark}
  & \multicolumn{2}{c|}{\xmark}
  \\

  & {\fabricPlusplus}
  & \multicolumn{2}{c|}{{\xmark}}
  & \multicolumn{2}{c|}{{\xmark}}
  & \multicolumn{2}{c|}{{\xmark}}
  & \multicolumn{2}{c|}{\vmark}
  & \multicolumn{2}{c|}{\vmark}
  \\

  \hline
\end{tabular}
\vspace{-8pt}
\end{table}

In Fabric/\revision{{\fabricPlusplus}}, the workflow of a transaction consists of three phases:
execution, ordering, and validation.
We elaborate on these phases below, using the example in
\Cref{fig:background_fabric}.


\textbf{Execution}. In this phase, clients propose transactions
  consisting of smart contract invocations to a set of endorsing peers,
  which are selected by an endorsement policy.
  Each endorsing peer executes transaction proposals concurrently and
  speculatively and returns the simulation results together with its endorsement
  signature.
  The results contain two value sets called the \textit{readset} and the
  \textit{writeset} which respectively represent the version dependencies (all
  keys read along with their version numbers) and the state updates (all keys
  modified along with their new values) produced by the simulation.
  For example, the readset and writeset of transactions in
  \Cref{fig:background_fabric} are summarized in \Cref{tab:simulation}.
  Throughout the execution, a transaction holds a read lock on the
  state database to guarantee that read values are the latest.
  Transactions that read across blocks, such as \texttt{Txn1} in
  \Cref{fig:background_fabric}, are not allowed in Fabric.
  \revision{In contrast, {\fabricPlusplus} optimistically removes this lock for more parallelism but aborts transactions that read across blocks.
  }
  After a client collects enough identical simulation results as required
  by the endorsement policy, it packages them into a single transaction and
  submits it to orderers.

\textbf{Ordering}. In this second phase, orderers receive
  transactions and sequence them into a total order to form a block, 
  \revision{
  as shown in Figure~\ref{fig:background_orderer}.
  Each orderer may belong to different administrative domains and receive different transaction proposals from various clients. 
  But all orderers rely on a single consensus protocol to establish a common transaction order.
  Fabric/{\fabricPlusplus} outsources this consensus service to Kafka. 
  With the consistent transaction stream from the consensus, each orderer employs the same block formation protocol to batch transactions into blocks, and consequently delivers them to peers.  
  A block is formed when the number of pending transactions reach the threshold or a timeout triggers. 
  For example, in \Cref{fig:background_fabric}, Orderer1 receives \texttt{Txn5} and Orderer2 receives \texttt{Txn2},  \texttt{Txn3}, and \texttt{Txn4}. 
  They send the transactions to the consensus service and receive the same transaction order. 
  Based on this order, both orderers package the transactions into identical blocks, i.e., block 3 with  \texttt{Txn2} to  \texttt{Txn5}, given that the protocol limits the maximum number of transactions per block to 4. 
  }

\textbf{Validation}. This phase is executed by each peer after
  a block has been retrieved from orderers.
  Transactions in a block are sequentially validated based on the
  corresponding endorsement policy and transaction serializability.
  The serializability of a transaction is tested by inspecting the
  staleness of its readset. The transaction is marked as invalid if it
  reads a key whose version at the read time is inconsistent (or older) than the
  latest version.
  For example, in \Cref{fig:background_fabric}, transaction \texttt{Txn2} in
  block $2$ is unserializable since it reads key \texttt{B} with version
  $(1,2)$ from block $1$, which is inconsistent with the latest version $(2,1)$
  in block $2$.
  Suppose that \texttt{Txn3} passes the serializability test and updates the version of key \texttt{C} from $(2,1)$ to $(3,3)$ in block $3$. Then, transactions \texttt{Txn4} and \texttt{Txn5} become invalid, since they both read an inconsistent version of key \texttt{C} in block $2$.
  Hence, after this validation phase, only transaction \texttt{Txn3} in
  block $3$ is committed, while transactions \texttt{Txn2}, \texttt{Txn4}, \texttt{Txn5} are aborted.
  \revision{To satisfy the serializability constraint, {\fabricPlusplus} introduces a reordering step immediately before block formation but after consensus.
  The reordering is based on the commit order determined by the consensus and the accessed records in the transactions. 
  For example, each orderer in {\fabricPlusplus} puts \texttt{Txn3} behind \texttt{Txn4} and  \texttt{Txn5}. Then, \texttt{Txn4} and \texttt{Txn5} are committed while \texttt{Txn3} is aborted. 
  Hence, {\fabricPlusplus} commits one more transaction than Fabric. 
  }

\subsection{Optimistic Concurrency Control in Databases}


Unlike pessimistic concurrency control, the OCC technique does not hold locks to
regulate transactional interference.
Instead, each transaction has a unique \textit{start timestamp} assigned to it
from a global atomic clock.
All queries reflect the state snapshot of the database at the start timestamp,
without observing later changes.
Each transaction is also assigned an \revision{\textit{end timestamp}}. 
Before committing, the database system checks the validity of a transaction
based on these two timestamps and the accessed records.
OCC can easily achieve \textit{Snapshot Isolation}, which disallows concurrent
transactions updating the same key \cite{berenson1995critique}.
Considering the fact that Snapshot Isolation suffers anomalies such as
Lost Update and Write Skew, a number of attempts have been
made to transform Snapshot Isolation to Serializable level
\cite{fekete2005making, yabandeh2012critique, bornea2011one}.
  


\section{Theoretical Analysis}
\label{sec:theory}

\begin{figure*}[tp] \centering
  \begin{subfigure}{0.7\textwidth}
    \includegraphics[width=0.99\textwidth]{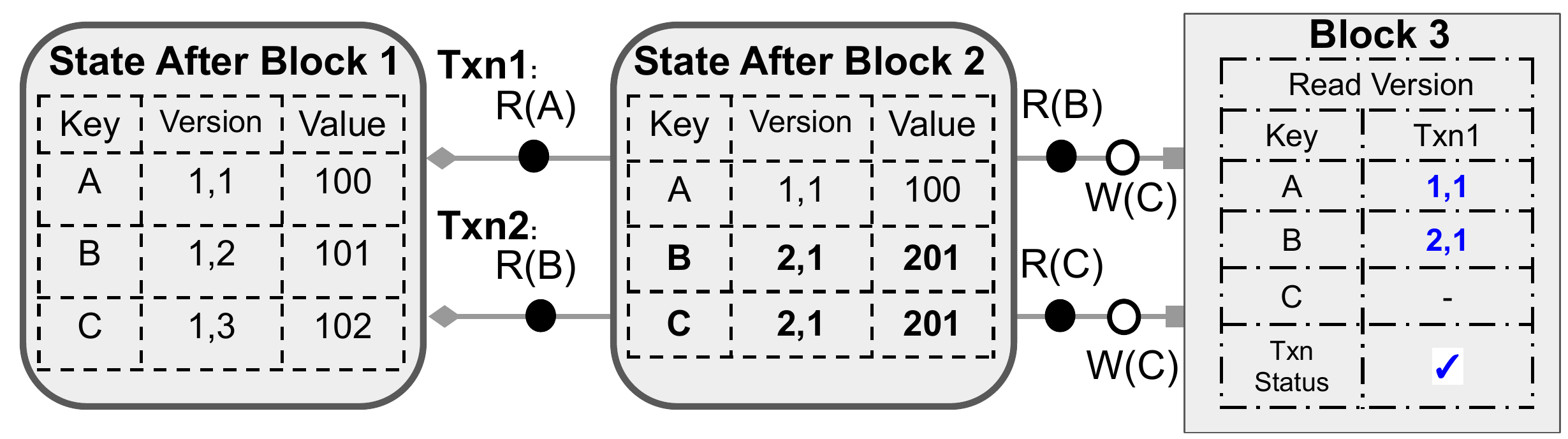}
    \caption{Txn1, which reads across blocks, is snapshot consistent and can be
    scheduled with serializability. Txn2 is not as its early-read key B
    is updated before its execution ends.}
    \label{fig:theory_snapshot}
  \end{subfigure}\hfill
  \begin{subfigure}{0.27\textwidth}
    \includegraphics[width=0.99\textwidth]{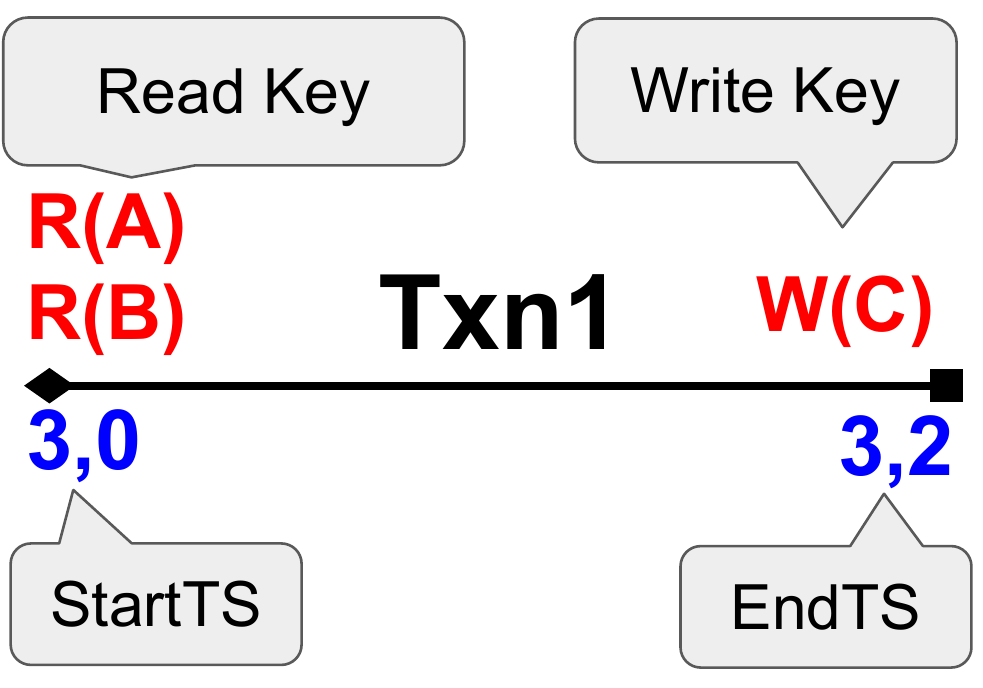}
    \caption{Concise notation to represent a transaction}
    \label{fig:theory_notation}
  \end{subfigure}

  \caption{Example of transactions reading across blocks}

\end{figure*}

In this section, we first describe the resemblance of transaction processing
techniques in EOV blockchains and OCC databases.
Then, we use the transactional analysis method of OCC databases to
reason about the serializability behavior of EOV blockchains, such as Fabric and {\fabricPlusplus}.
Finally, we propose a reordering-based concurrency control algorithm for ordering
serializable transactions in EOV blockchains, 
\revision{along with the discussion on its security implications.
}

\subsection{Resemblance~in~Transaction~Processing}
\label{sec:resembalance}
%
Similar to database systems where the concept \textit{database snapshot} is used
to describe a read-only, static view of a database~\cite{kung1981optimistic}, in blockchains, we can define the similar concept of \textit{blockchain snapshot} as follows.

\begin{definition}[Blockchain snapshot]
  \label{defn:snapshot}
  A blockchain snapshot is the state of a blockchain after a block is committed.
  Let $M$ be the sequence number of the committed block, then the
  corresponding snapshot is denoted as $M$ and is said to have the sequence
  number $(M{+}1,0)$~\footnote{We use the two-value tuple with 0 fixed for the
    second element. This is to facilitate the ordering relations $<$ of sequence numbers of blockchain snapshots and
    transaction timestamps.}.
\end{definition}



\begin{definition}[Snapshot consistency]
  \label{defn:inconsistentExe}
  A transaction is snapshot consistent if there exists a blockchain snapshot M
  from which all the transaction's records are read.
\end{definition}



Transactions in Fabric satisfy snapshot consistency since Fabric uses a lock to
ensure the simulation is done against the latest state.
\revision{{\fabricPlusplus} optimistically removes the lock but early aborts transactions which read across blocks.
Hence, it also satisfies the snapshot consistency. 
}
However, eliminating transactions based on cross-block reading might
lead to over-aborting snapshot consistent transactions.

  For examle, in \Cref{fig:theory_snapshot}, \texttt{Txn1} reads key
  \texttt{A} of version $(1,1)$ in snapshot $1$ and key \texttt{B} of
  version $(2,1)$ in snapshot $2$.
  These versions are the same as the versions of keys \texttt{A} and \texttt{B} in snapshot $2$.
  Hence, \texttt{Txn1} is \textit{snapshot consistent} with block snapshot $2$.
  In contrast, transaction \texttt{Txn2}, which also reads across blocks, does
  not achieve snapshot consistency because the value of previously read key B
  changes in block 2.
  %

%

\begin{proposition}
  \label{proposition:crossblockRead}
  There exist snapshot-consistent transactions that read across blocks. For such
  a transaction, its block snapshot is determined by its last read operation.
\end{proposition}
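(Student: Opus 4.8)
The plan is to handle the two assertions separately. For the existence claim I would argue by explicit construction, formalizing the \texttt{Txn1} example of \Cref{fig:theory_snapshot}. Consider a transaction $T$ whose simulation spans two consecutive blocks: it first reads a key $x$ while snapshot $M{-}1$ is the latest committed state, and later reads a key $y$ after block $M$ has committed, so that $T$ genuinely reads across blocks. If no transaction in block $M$ writes $x$, then the version of $x$ observed by $T$ coincides with the version of $x$ recorded in snapshot $M$, while the read of $y$ is by construction taken from snapshot $M$. Hence every record of $T$ agrees with snapshot $M$, so by \Cref{defn:inconsistentExe} $T$ is snapshot consistent despite reading across blocks, which establishes the first sentence.

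For the characterization I would set up a timeline of the transaction's read operations $r_1, \dots, r_n$ ordered by the physical time at which the endorsing peer executes them, and let $S_k$ denote the index of the most recently committed block at the instant $r_k$ fires, so that $r_k$ returns the version of its key recorded in snapshot $S_k$. The structural fact I would exploit is monotonicity: block commits advance the committed-block index one at a time and are totally ordered in physical time relative to the reads, so $S_1 \le S_2 \le \dots \le S_n$, and the last read $r_n$ observes state from the largest index $L := S_n$. I would then argue that $L$ is exactly the snapshot witnessing $T$'s consistency: the last read already reflects snapshot $L$, and for $T$ to be snapshot consistent every earlier read $r_k$ must still be current as of $L$, i.e.\ its key must not be overwritten in any block of $(S_k, L]$ (this is precisely what \Cref{fig:theory_snapshot} illustrates failing for \texttt{Txn2}, whose key \texttt{B} is updated before its read phase ends). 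Consequently the block snapshot equals $L$, the snapshot current at the last read, which is the claim.

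The main subtlety I anticipate is that \Cref{defn:inconsistentExe} is phrased existentially, so a naive reading would also admit some earlier index $M' < L$ as a witness whenever the read keys happen to be stable across the intervening blocks, leaving the snapshot ostensibly ambiguous. I would resolve this by pinning the witness to the canonical value $L = S_n$ determined by the last read and verifying two points: first, that $L$ is consistent with every $r_k$, using the monotonicity $S_k \le L$ together with the hypothesis of snapshot consistency to exclude any intervening write on a read key in $(S_k, L]$; and second, that no snapshot strictly beyond $L$ is relevant, since none had committed during the simulation. Together these identify $L$, the snapshot current at the last read operation, as the unique block snapshot one should assign to $T$, completing the argument.
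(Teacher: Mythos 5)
Your proof is correct, and for the first sentence it takes essentially the same route as the paper: the paper's entire proof is just to cite \texttt{Txn1} of \Cref{fig:theory_snapshot} as a witness, relying on the preceding prose that showed \texttt{Txn1} reads across blocks $1$ and $2$ yet agrees with snapshot $2$; your explicit construction (read $x$ before block $M$ commits, read $y$ after, with no write to $x$ in block $M$) is just a formalized version of that same example. Where you genuinely go beyond the paper is the second sentence. The paper never argues it at all---it is asserted and left to be read off the example---whereas you supply the monotonicity argument $S_1 \le \cdots \le S_n = L$ and, importantly, you notice and address the real subtlety: \Cref{defn:inconsistentExe} is existential, so when the read keys are stable over several blocks the witness snapshot is not unique, and ``determined by the last read'' must be understood as fixing the canonical (latest attainable during simulation) witness $L = S_n$. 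That resolution is sound and is in fact the reading the rest of the paper depends on (e.g.\ \startTS{Txn} in the start-timestamp definition is exactly this $L$). In short: same witness-based argument for existence, plus a more careful justification of the characterization than the paper offers; the extra care buys a precise statement of what ``the'' block snapshot means, at the cost of length the authors evidently judged unnecessary for what is essentially an illustrative proposition.
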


\begin{proof}
  \texttt{Txn1} in \Cref{fig:theory_snapshot} is a witness example.
  We have shown in the previous example that \texttt{Txn1} reads across
  blocks 1 and 2, and it is still consistent with block snapshot 2.
\end{proof}

\Cref{proposition:crossblockRead} shows that a legitimate transaction in an
EOV blockchain can read across blocks, if their states are consistent.
This makes the EOV blockchain similar to an OCC database, as the
latter also reads from consistent states determined by the transaction's start
timestamp.
We also observe that the blockchain's sequence numbers have similar properties with
databases' timestamps, such as atomicity, monotony, total order, and
unique mapping to snapshots.
Therefore, we define the timestamps of blockchain transactions using their
sequence numbers.

\begin{definition}[Start timestamp]
  \label{defn:start-timestamp}
  The start timestamp of transaction \texttt{Txn}, denoted by \startTS{Txn}, is
  the sequence number of its read snapshot.
\end{definition}

\revision{
\begin{definition}[End timestamp]
  \label{defn:commit-timestamp}
  The end timestamp of transaction \texttt{Txn}, denoted by \commitTS{Txn}, is its sequence number in the block, determined by the consensus. 
  %
\end{definition}
}

For example, in \Cref{fig:theory_snapshot}, \texttt{Txn1} has $\startTS{Txn1} =
(3, 0)$ and $\commitTS{Txn1} = (3,1)$, since it lastly reads from block $2$ and
occupies the first position in block $3$.
For brevity, in later paragraphs, we use the notation presented in
\Cref{fig:theory_notation} to denote a transaction.
\revision{Moreoever, the sequence numbers of transactions' start or end timestamps are lexicographically ordered, e.g., $(2,1) < (2,2) = (2,2) < (3,0)$.
}



\begin{definition}[Concurrent transactions]
  \label{defn:concurrent-transaction}
  \revision{Two transactions \texttt{Txn1} and \texttt{Txn2} are said to be concurrent if their executions overlap. 
  To be specific, if \texttt{Txn1} ends earlier than \texttt{Txn2} (i.e., $\commitTS{Txn1} < \commitTS{Txn2}$), then \texttt{Txn2} must start before \texttt{Txn1} ends (i.e., $\startTS{Txn2} < \commitTS{Txn1}$). Otherwise, if \texttt{Txn2} ends earlier than \texttt{Txn1} (i.e., $\commitTS{Txn2} < \commitTS{Txn1}$), then \texttt{Txn1} must start before \texttt{Txn2} ends (i.e., $\startTS{Txn1} < \commitTS{Txn2}$).
  }
\end{definition}


\begin{proposition}
  \label{proposition:concurrency}  
  Each pair of transactions in the same block are concurrent. 
\end{proposition}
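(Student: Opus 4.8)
The plan is to reduce the concurrency condition of \Cref{defn:concurrent-transaction} to a single lexicographic comparison of sequence numbers. First I would fix two distinct transactions \texttt{Txn1} and \texttt{Txn2} that are placed in the same block, say block $N$. By \Cref{defn:commit-timestamp}, their end timestamps record their positions within that block, so $\commitTS{Txn1} = (N, i)$ and $\commitTS{Txn2} = (N, j)$ with $i \neq j$, since distinct transactions occupy distinct positions. Without loss of generality I take $i < j$, so that $\commitTS{Txn1} < \commitTS{Txn2}$; the case $j < i$ is symmetric. Under \Cref{defn:concurrent-transaction}, this ordering means the pair is concurrent exactly when $\startTS{Txn2} < \commitTS{Txn1}$, so that single inequality is all I need to establish.

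The key step is to bound the start timestamp of any transaction that ends in block $N$. By \Cref{defn:start-timestamp}, $\startTS{Txn2}$ is the sequence number of the snapshot \texttt{Txn2} reads from. Since \texttt{Txn2} is committed in block $N$, it must have read from a blockchain snapshot committed before block $N$ was formed, the latest such being snapshot $N-1$ (this also covers cross-block reads via \Cref{proposition:crossblockRead}, whose read snapshot is determined by the last read and is still of index at most $N-1$). By the numbering convention of \Cref{defn:snapshot}, snapshot $N-1$ carries sequence number $(N, 0)$, so every snapshot available to \texttt{Txn2} has a sequence number at most $(N, 0)$, giving $\startTS{Txn2} \leq (N, 0)$.

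It then remains to combine the two facts. Positions within a block are numbered from $1$, so $i \geq 1$ and hence $\commitTS{Txn1} = (N, i) \geq (N, 1)$. Comparing lexicographically, $\startTS{Txn2} \leq (N, 0) < (N, 1) \leq (N, i) = \commitTS{Txn1}$, which yields $\startTS{Txn2} < \commitTS{Txn1}$ and completes the verification of concurrency.

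I expect the only genuinely load-bearing step to be the bound $\startTS{Txn2} \leq (N, 0)$, where the snapshot sequence-number convention $(M{+}1, 0)$ of \Cref{defn:snapshot} meets the requirement that a block-$N$ transaction reads from a snapshot of index at most $N-1$; the rest is a direct lexicographic comparison with no real difficulty, and symmetry disposes of the reversed case.
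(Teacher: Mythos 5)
Your proof is correct and follows essentially the same route as the paper's: both bound $\startTS{Txn2} \le (N,0)$ via the observation that a transaction committed in block $N$ can read from at most snapshot $N-1$, and then conclude by the lexicographic comparison $\startTS{Txn2} \le (N,0) < (N,i) = \commitTS{Txn1}$. Your version merely spells out the appeal to \Cref{defn:snapshot}'s numbering convention and the symmetry of \Cref{defn:concurrent-transaction} a bit more explicitly than the paper does.
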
 

\begin{proof}
  Suppose two transactions \texttt{Txn1} and \texttt{Txn2} are committed in the
  same block $M$ at position $p$ and $q$, respectively, where $p < q$.
  Since the latest block that \texttt{Txn2} can read from is $M{-}1$, we have
  that: $\startTS{Txn2} \le (M,0) < \commitTS{Txn1} = (M,p) < \commitTS{Txn2} =
  (M,q)$.
  Hence, \texttt{Txn1} and \texttt{Txn2} are concurrent.
\end{proof}

\begin{proposition}
  \label{proposition:nonconcurrency}  
  The reverse of \Cref{proposition:concurrency} is not true: there are concurrent
  transactions not belonging to the same block.
\end{proposition}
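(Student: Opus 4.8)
The plan is to prove the statement by exhibiting an explicit counterexample, since the claim is precisely that the converse of \Cref{proposition:concurrency} fails. \Cref{proposition:concurrency} shows that same-block membership implies concurrency; to falsify the converse it suffices to produce a single pair of transactions that satisfy the concurrency predicate of \Cref{defn:concurrent-transaction} yet are committed in two distinct blocks. So the entire burden of the proof reduces to finding one witness.

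The key observation that makes such a witness possible is that, by \Cref{defn:start-timestamp} and \Cref{defn:commit-timestamp}, a transaction's start timestamp is fixed by the snapshot it read from and is independent of which block ultimately orders it: a transaction committed in block $M$ at position $p$ has $\commitTS{} = (M,p)$, but its start timestamp can be as early as the sequence number of any earlier snapshot it read. This decoupling lets me keep two transactions' execution lifespans overlapping while routing their commit positions into separate blocks. Concretely, I would take \texttt{Txn1} reading from snapshot $1$ (sequence number $(2,0)$) and committed as the second transaction in block $2$, so $\startTS{Txn1} = (2,0)$ and $\commitTS{Txn1} = (2,2)$; and \texttt{Txn2} also reading from snapshot $1$ but committed as the first transaction in block $3$, so $\startTS{Txn2} = (2,0)$ and $\commitTS{Txn2} = (3,1)$. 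Verifying concurrency is then immediate: since $\commitTS{Txn1} = (2,2) < (3,1) = \commitTS{Txn2}$, the relevant case of \Cref{defn:concurrent-transaction} requires $\startTS{Txn2} < \commitTS{Txn1}$, and indeed $(2,0) < (2,2)$ lexicographically. As \texttt{Txn1} lies in block $2$ and \texttt{Txn2} in block $3$, they are concurrent but not co-block, which is exactly the desired counterexample.

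There is no substantive obstacle here, as a single witness settles a negated universal; the only point demanding care is confirming that the constructed timestamps are genuinely realizable under the blockchain's rules. The subtle part is justifying that a transaction committed in block $3$ may legitimately carry a start timestamp as early as $(2,0)$, i.e. that it simulated against block $1$ and was nonetheless ordered into a later block. This is permitted precisely because a transaction may begin its execution against an older snapshot and be sequenced into a subsequent block by the consensus, the same long-running behavior already illustrated for cross-block reads in \Cref{fig:theory_snapshot} and formalized in \Cref{proposition:crossblockRead}. Thus the main thing to be careful about is invoking the correct timestamp conventions rather than any real difficulty in the argument.
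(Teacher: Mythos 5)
Your proof is correct and takes essentially the same approach as the paper: both exhibit a concrete witness pair in which a transaction committed in a later block reads from a snapshot that precedes the commit of a transaction in an earlier block, and then verify the concurrency predicate directly from the timestamps. Your instantiation (blocks $2$ and $3$, both reading snapshot $1$) is just a concrete version of the paper's blocks $M$ and $M{+}1$ example.
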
 

\begin{proof}
  We present a witness example in \Cref{fig:theory_concurrency}, where
  transactions \texttt{Txn1} and \texttt{Txn2} respectively belong to block $M$
  and $M{+}1$. However, \texttt{Txn2} reads from a block earlier than $M$.
  Hence, we have: $\startTS{Txn2} \le (M,0) < \commitTS{Txn1} = (M,1) <
  \commitTS{Txn2} = (M{+}1,1)$.
  Therefore, \texttt{Txn1} and \texttt{Txn2} are concurrent.
\end{proof}

\begin{figure}[tp] \centering
  \includegraphics[width=0.6\textwidth]{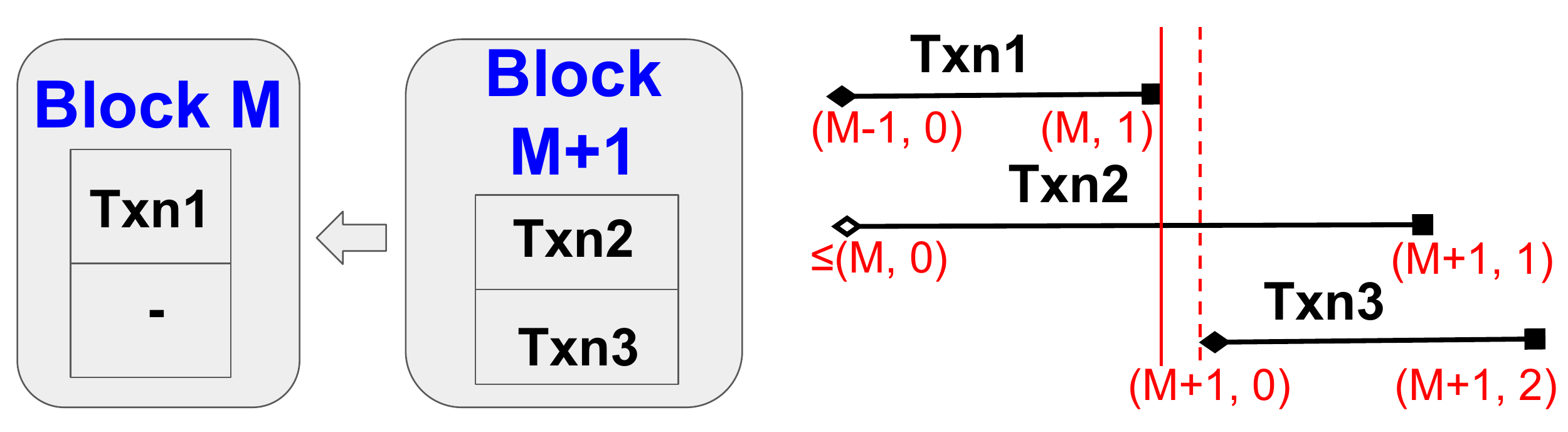}
  \caption{\texttt{Txn2} and \texttt{Txn3} are in the same block and
    concurrent.
    \texttt{Txn1} and \texttt{Txn2} are in different blocks, but they
    are still concurrent.
    \texttt{Txn1} and \texttt{Txn3} are not concurrent.}
  \label{fig:theory_concurrency}
\end{figure}

\revision{From the above two propositions, concurrency does not only occur between transactions within the same block.  
{\fabricPlusplus} fails to consider dependencies among transactions across blocks.
Hence, its reordering effect is limited. 
}

\subsection{Serializability Analysis}
\label{sec:serializabilityAnalysis}

\begin{figure}[tp]
  \begin{subfigure}{0.33\textwidth}
      \includegraphics[width=0.99\textwidth]{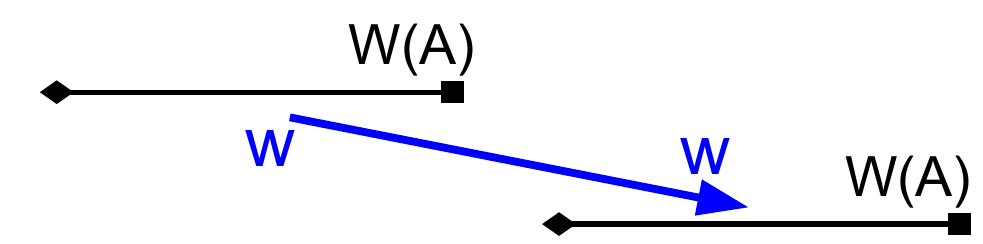} 
      \caption{\textit{n-ww}}
  \end{subfigure}
  \begin{subfigure}{0.33\textwidth}
      \includegraphics[width=0.99\textwidth]{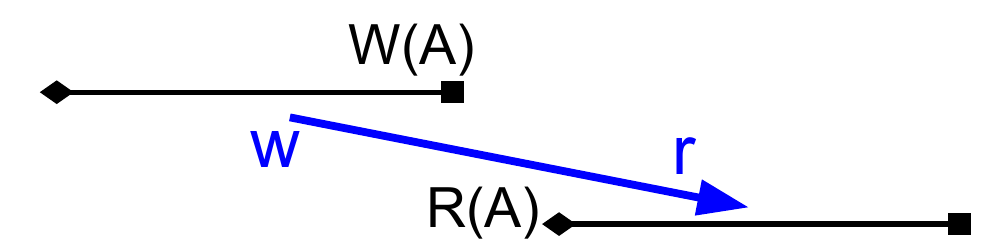}
      \caption{\textit{n-wr}}
  \end{subfigure}
  \begin{subfigure}{0.33\textwidth}
      \includegraphics[width=0.99\textwidth]{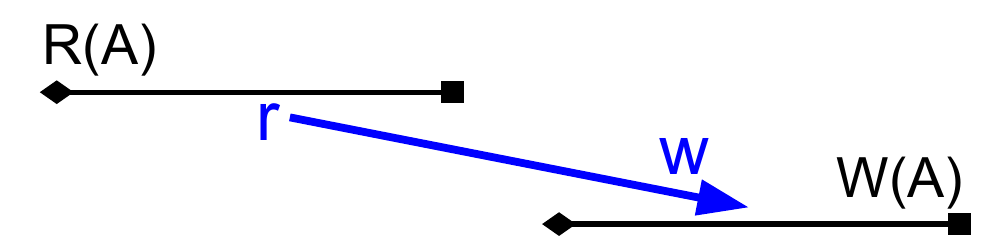} 
      \caption{\textit{n-rw}}
  \end{subfigure}
  \begin{subfigure}{0.33\textwidth}
      \includegraphics[width=0.99\textwidth]{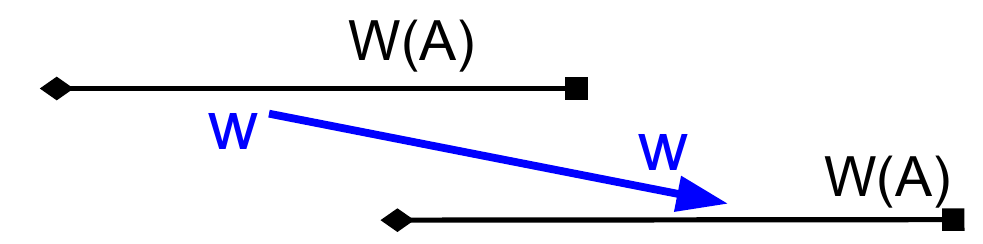} 
      \caption{\textit{c-ww}}
  \end{subfigure}
  \begin{subfigure}{0.33\textwidth}
      \includegraphics[width=0.99\textwidth]{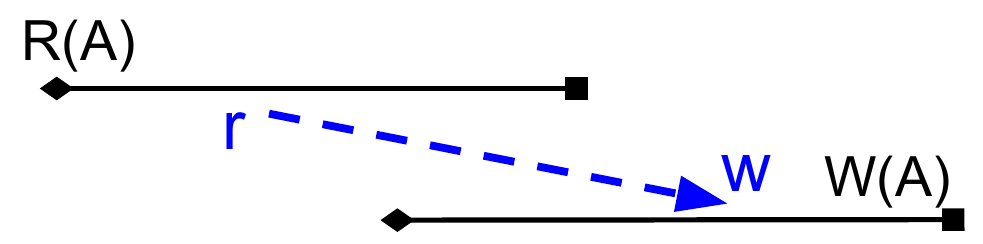} 
      \caption{\textit{c-rw}}
  \end{subfigure}
  \begin{subfigure}{0.33\textwidth}
      \includegraphics[width=0.99\textwidth]{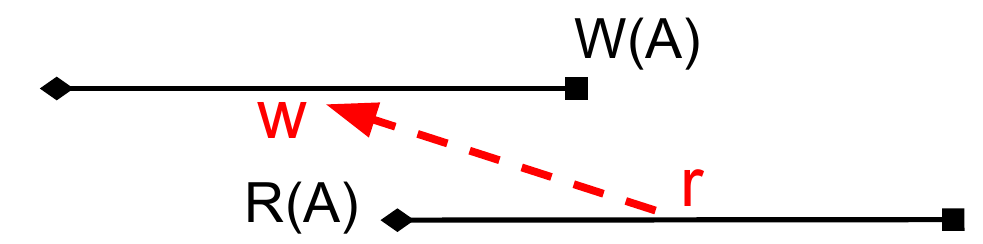} 
      \caption{\textit{anti-rw}}
  \end{subfigure}  
  \caption{Six canonical dependencies between snapshot transactions.
    Here, (a), (b), and (c) are non-concurrent and (d), (e), and (f)
    are concurrent.}
  \label{fig:theory_dependencies}
\end{figure}

\Cref{fig:theory_dependencies} shows all six scenarios of canonical transaction
dependency (or conflict) between snapshot transactions, as described by
\cite{fekete2005making}.
Among them, three dependencies, namely \scenario{n-ww}, \scenario{n-wr}, and
\scenario{n-rw} are between non-concurrent transactions.
The other three dependencies, namely \scenario{c-ww}, \scenario{c-rw}, and \scenario{anti-rw} are between concurrent transactions.
According to the conflict serializability theorem in
\cite{weikum2001transactional}, the effect of a serializable transaction
schedule is equivalent to any serialized transaction history that respects
dependency order.
Note that the dependency graph of the serializable transaction schedule must be acyclic.
%




\begin{definition}[Strong Serializability]
  \revision{A schedule of transactions is \textit{Strong Serializable} if its effect is
  equivalent to the serialized history, which conforms to the transactions'
  commit order determined by their end timestamps.
  }
\end{definition}

\begin{theorem}   
  \label{theory:strict}
  \revision{A schedule of transactions without \scenario{anti-rw} achieves Strong Serializability.
  }
\end{theorem}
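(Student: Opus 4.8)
The plan is to show that, once \scenario{anti-rw} edges are removed, every remaining dependency edge is oriented in the direction of increasing end timestamp; Strong Serializability then follows from the conflict serializability theorem. Concretely, I would establish the key claim: for each admissible dependency type, if $\texttt{T1} \to \texttt{T2}$ denotes the induced edge (by convention the source must precede the target in any equivalent serial order), then $\commitTS{T1} < \commitTS{T2}$.

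I would prove this claim by inspecting the five dependency types retained in \Cref{fig:theory_dependencies}. In the write-write cases \scenario{n-ww} and \scenario{c-ww}, the edge runs from the transaction installing the earlier version of a key to the one installing the next version; since the version order of a key coincides with the commit order fixed by the end timestamps, $\commitTS{T1} < \commitTS{T2}$ holds. In \scenario{n-wr}, the reader $\texttt{T2}$ observes precisely the version written by $\texttt{T1}$, so by snapshot consistency (\Cref{defn:inconsistentExe}) $\texttt{T1}$ must have committed no later than $\texttt{T2}$'s read snapshot, whence $\commitTS{T1} \le \startTS{T2} < \commitTS{T2}$. For the read-write conflicts the edge always points from the reader to the writer, because the reader missed the version installed by the writer. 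In \scenario{n-rw} the two transactions are non-concurrent, so the reader commits before the writer begins and again $\commitTS{T1} < \commitTS{T2}$; in \scenario{c-rw} they are concurrent, yet by the very property distinguishing it from \scenario{anti-rw} the reader still commits first, so $\commitTS{T1} < \commitTS{T2}$. The sole configuration in which a reader-to-writer edge would run backward in commit order is \scenario{anti-rw}, which is excluded by hypothesis.

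With the claim established, I would conclude as follows. Because end timestamps induce a total order (\Cref{defn:commit-timestamp}) and every dependency edge strictly increases it, the dependency graph admits no cycle, so the schedule is conflict serializable. Furthermore, the serial history obtained by listing the transactions in increasing end-timestamp order honors every dependency edge. By the conflict serializability theorem~\cite{weikum2001transactional}, the effect of the schedule is equivalent to that of this commit-order serial history, which is exactly the statement of Strong Serializability.

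I expect the crux to lie in the read-write cases, where I must argue rigorously from \Cref{defn:concurrent-transaction} why a concurrent read-write conflict forces the reader's snapshot to precede the writer's installation, so that the reader necessarily heads the anti-dependency edge. Pinning this down shows that the distinction between \scenario{c-rw} and \scenario{anti-rw} is precisely whether $\commitTS{T1} < \commitTS{T2}$ or $\commitTS{T2} < \commitTS{T1}$ for the reader $\texttt{T1}$ and writer $\texttt{T2}$; only the latter inverts the commit order, confirming that \scenario{anti-rw} is the unique obstruction to Strong Serializability.
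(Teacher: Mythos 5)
Your proposal is correct and follows essentially the same route as the paper's proof: both rest on the observation that \scenario{anti-rw} is the unique dependency oriented against the commit order, so without it every edge increases the end timestamp, the graph is acyclic, and the commit-order serialization respects all dependencies. The only difference is presentational --- you verify the five cases explicitly and derive acyclicity from monotonicity of \commitTS{\cdot} along edges, whereas the paper asserts the same alignment in one sentence and argues acyclicity by contradiction via the last-committed transaction in a putative cycle.
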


\begin{proof}
  \revision{We first prove that any transaction schedule without \scenario{anti-rw} achieves Serializability.
  By contradiction, suppose that such a transaction schedule does not achieve Serializability.
  Then, in the schedule there must be a subset of transactions with a dependency cycle,
  in which the last committed transaction is denoted by \texttt{Txn}.
  Then \texttt{Txn} must exhibit an \scenario{anti-rw} dependency because
  \scenario{anti-rw} is the only one among all six dependencies that relates
  later transactions to earlier ones.
  But this contradicts our assumption. 
  Hence, the transaction schedule is serializable.
  Next, we prove that it also achieves Strong Serializability.
  Since the order of the five remaining dependencies is consistent to their
  commit order, the serialized history that respects the commit order also
  respects the dependency order.
  According to the conflict serializability theorem in
  \cite{weikum2001transactional}, this serialized transaction history has the
  equivalent effect of the serializable schedule.
  Hence, the transaction schedule is Strong Serializable.
  }
\end{proof} 

\revision{We remark that Fabric/{\fabricPlusplus} do not allow \scenario{anti-rw} between two transactions because the latter transaction would read an old version of the updated key, hence, it must be aborted.
Based on Theorem~\ref{theory:strict}, transactions in Fabric/{\fabricPlusplus} satisfy Strong Serializability, which is more stringent than Serializability~\cite{androulaki2018hyperledger}. This opens up the opportunity to reduce the transaction abort rate. 
}

\subsection{Reorderability Analysis}
\label{sec:reorderabilityAnalysis}
\revision{Under Serializability instead of Strong Serializability,}
we formally analyze the reorderability of transactions in EOV blockchains.
%
We focus on determining a serializable schedule by switching the commit order of
pending transactions.

\begin{lemma} 
  \label{lemma:reorder_concurrency}
  In blockchains, reordering can only happen between concurrent transactions. 
\end{lemma}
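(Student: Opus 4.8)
The plan is to prove the contrapositive: if two transactions are non-concurrent, then no reordering can swap their commit order. I would start from two non-concurrent transactions \texttt{Txn1} and \texttt{Txn2}, and by relabeling assume without loss of generality that \texttt{Txn1} commits first, i.e.\ $\commitTS{Txn1} < \commitTS{Txn2}$. Negating \Cref{defn:concurrent-transaction} in this case yields the one inequality the whole argument hinges on, namely $\startTS{Txn2} \ge \commitTS{Txn1}$: the read snapshot of \texttt{Txn2} is no earlier than the commit point of \texttt{Txn1}.

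Next I would make precise what reordering is permitted to change. Because the readset (and hence the read snapshot, i.e.\ the start timestamp) of each transaction is fixed during the execution phase, reordering at the ordering phase may only reassign end timestamps, while start timestamps are immutable. I would then invoke the basic validity invariant that every transaction must commit strictly after the snapshot it read from, i.e.\ $\startTS{Txn} < \commitTS{Txn}$ for any transaction in any valid schedule (a transaction cannot commit into a block at or before the block it read from, which follows from \Cref{defn:start-timestamp} and \Cref{defn:commit-timestamp}).

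With these two facts the contradiction is immediate. Reordering \texttt{Txn1} and \texttt{Txn2} means producing a schedule in which \texttt{Txn2} commits before \texttt{Txn1}, so \texttt{Txn2} must occupy a commit slot at or before \texttt{Txn1}'s, giving a new end timestamp with $\commitTS{Txn2} \le \commitTS{Txn1}$. But combining the validity invariant with the key inequality gives $\commitTS{Txn2} > \startTS{Txn2} \ge \commitTS{Txn1}$, hence $\commitTS{Txn2} > \commitTS{Txn1}$, contradicting $\commitTS{Txn2} \le \commitTS{Txn1}$. Therefore non-concurrent transactions retain their relative commit order under every valid reordering, which is exactly the contrapositive of the lemma.

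The step I expect to require the most care is the second one, pinning down the admissible reordering operations. The crisp version treats reordering as a transposition that keeps start timestamps fixed and preserves the invariant $\startTS{} < \commitTS{}$, and I would need to argue that any reordering placing \texttt{Txn2} ahead of \texttt{Txn1} genuinely forces $\commitTS{Txn2} \le \commitTS{Txn1}$, so that the snapshot constraint $\startTS{Txn2} \ge \commitTS{Txn1}$ really does block the swap. Once this modeling choice is stated explicitly the inequalities close the proof with no further computation; the converse direction (that concurrent transactions, where $\startTS{Txn2} < \commitTS{Txn1}$, actually leave room to place \texttt{Txn2} before \texttt{Txn1}) can be mentioned as motivation but is not needed for the stated one-directional claim.
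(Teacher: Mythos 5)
Your proposal is correct in substance but takes a genuinely different route from the paper's one-line argument. The paper reasons structurally: by the contrapositive of \Cref{proposition:concurrency}, non-concurrent transactions cannot share a block, and since non-concurrency forces the earlier transaction's end timestamp to precede the later one's (already materialized) read snapshot, the earlier transaction already resides in a committed block; swapping would then require mutating a sealed block, which immutability forbids. You instead argue by pure timestamp arithmetic from fixed start timestamps, the validity invariant $\startTS{Txn} < \commitTS{Txn}$, and the non-concurrency inequality $\startTS{Txn2} \ge \commitTS{Txn1}$. Under your ``transposition'' model this closes cleanly: the new $\commitTS{Txn2}$ equals the old $\commitTS{Txn1}$, so the invariant yields old $\commitTS{Txn1} > \startTS{Txn2} \ge$ old $\commitTS{Txn1}$, a contradiction. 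The one spot needing care --- which you correctly flag --- is that under a more general reordering the swap condition $\commitTS{Txn2} \le \commitTS{Txn1}$ refers to the \emph{new} end timestamp of \texttt{Txn1}, while $\startTS{Txn2} \ge \commitTS{Txn1}$ refers to the \emph{old} one; these contradict only if \texttt{Txn1}'s slot cannot move later. The paper's immutability observation supplies exactly that missing piece: $\commitTS{Txn1} \le \startTS{Txn2} = (M{+}1,0)$ places \texttt{Txn1} in block $M$ or earlier, which is already committed because \texttt{Txn2} read its snapshot, so \texttt{Txn1}'s end timestamp is frozen. Adding that one sentence makes your version fully general; in exchange, your inequality-based proof makes explicit the quantitative fact the paper leaves implicit, namely that non-concurrency itself is what guarantees one of the two transactions is already sealed.
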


\begin{proof} 
  Assume transaction reordering occurs between two non-concurrent transactions.
  These transactions are committed in different blocks, due to the
  contra-positive of \Cref{proposition:concurrency}. Switching their order means
  changing a previously committed block, which is impossible in blockchains due
  to their immutability.
\end{proof}

\begin{lemma} 
  \label{lemma:reorder_impact}
  A transaction does not change its concurrency relationship with respect to
  others after reordering.
\end{lemma}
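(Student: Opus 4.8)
The plan is to reduce the claim to a comparison of start and end timestamps, exploiting the fact that a reordering is a purely intra-block permutation. By \Cref{lemma:reorder_concurrency}, a reordering can only swap transactions committed in the same block; equivalently, it permutes the positions (second coordinates) of the end timestamps of the transactions inside a single block $M$, while leaving untouched (i) every start timestamp, since a transaction's read snapshot is fixed during the execution phase and thus before ordering, and (ii) every timestamp of every transaction outside block $M$. Fixing a transaction \texttt{A}, I would show that for every other transaction \texttt{B} the concurrency predicate of \Cref{defn:concurrent-transaction} evaluates identically before and after the permutation.

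First I would dispose of the routine cases. If neither \texttt{A} nor \texttt{B} lies in $M$, none of their timestamps change and the predicate is trivially preserved. If both lie in $M$, then by \Cref{proposition:concurrency} any two transactions in the same block are concurrent, so they are concurrent both before and after the permutation, as both remain in $M$. The only substantive case is when exactly one of them, say \texttt{A}, lies in $M$ while \texttt{B} lies in a distinct block $N$. Since reordering never moves a transaction across blocks, the order between $\commitTS{A}$ and $\commitTS{B}$ is fixed, and hence so is the branch of \Cref{defn:concurrent-transaction} that applies.

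The hard part is the branch $N > M$, where $\commitTS{A} < \commitTS{B}$ so that concurrency amounts to $\startTS{B} < \commitTS{A}$, precisely the comparison in which the changing quantity $\commitTS{A} = (M,p)$ appears. The observation that closes the argument is that $\startTS{B}$ is a blockchain-snapshot sequence number and therefore has the form $(s,0)$ with zero second coordinate (\Cref{defn:snapshot,defn:start-timestamp}). Consequently $(s,0) < (M,p)$ holds if and only if $s \le M$, independently of the position $p \ge 1$, and likewise for the permuted position $p'$; so the truth value of $\startTS{B} < \commitTS{A}$ is determined by block numbers alone and is insensitive to the intra-block move. In the symmetric branch $N < M$ one has $\commitTS{B} < \commitTS{A}$ and concurrency amounts to $\startTS{A} < \commitTS{B}$, both of whose operands are invariant under the permutation, so preservation is immediate. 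Collecting the cases yields the claim.
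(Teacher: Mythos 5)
Your proof is correct and rests on the same underlying mechanism as the paper's: both arguments reduce concurrency preservation to the invariance of the timestamp comparisons in \Cref{defn:concurrent-transaction}, using the facts that start timestamps are untouched by reordering and that a reordered commit timestamp stays strictly after $(M,0)$. The decomposition differs, though. The paper fixes a pending transaction and cases on the \emph{status} of the other transaction (non-concurrent, concurrent-and-committed, or pending), re-deriving the inequalities in each case; you case on \emph{block membership} and dispatch the both-in-$M$ case via \Cref{proposition:concurrency}. Your explicit observation that a start timestamp has second coordinate $0$, so that $(s,0) < (M,p)$ depends only on $s$ and $M$ and not on the intra-block position $p \ge 1$, is a crisper packaging of the paper's remark that ``reordering cannot move the commit time of \texttt{Txn} before $(M,0)$.'' One small quibble: you attribute ``reordering is a purely intra-block permutation'' to \Cref{lemma:reorder_concurrency}, but that lemma only confines reordering to \emph{concurrent} transactions, and concurrent transactions need not share a block (\Cref{proposition:nonconcurrency}); what actually confines the permutation to the block being formed is the immutability of committed transactions (the same fact used to prove \Cref{lemma:reorder_concurrency}) together with the fact that the reordered transactions are exactly the pending ones destined for block $M$. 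This misattribution does not affect the validity of the argument, and your case analysis would in any event also cover two pending transactions landing in distinct future blocks.
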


\begin{proof}
  Assume the next block's sequence number is $M$.
  For any pending transaction \texttt{Txn}, we have: $\startTS{Txn} \le (M,0) <
  \commitTS{Txn}$.
  Other transactions are classified into three cases.
  (i) For any non-concurrent transaction \texttt{Txn1}, we have:
  $\commitTS{Txn1} < \startTS{Txn}$.
  Since reordering does not affect $\startTS{Txn}$, the non-concurrency between
  \texttt{Txn} and \texttt{Txn1} still holds.
  (ii) For any concurrent transaction \texttt{Txn2} committed earlier than block
  $M$, we have: $\startTS{Txn} < \commitTS{Txn2} < (M,0) < \commitTS{Txn}$.
  Since reordering cannot move the commit time of \texttt{Txn} before $(M,0)$,
  \texttt{Txn2} and \texttt{Txn} remain concurrent.
  (iii) For any pending transaction \texttt{Txn3},  we have either $\startTS{Txn} < (M, 0) < \commitTS{Txn3} < \commitTS{Txn}$, or $\startTS{Txn3} < (M, 0) < \commitTS{Txn} < \commitTS{Txn3}$.
  Hence, \texttt{Txn} and \texttt{Txn3} remain concurrent after reordering.
\end{proof}

The above \Cref{lemma:reorder_concurrency} ensures that reordering does not
impact non-concurrent transactions and their dependencies.
\Cref{lemma:reorder_impact} ensures that non-concurrent transactions are not
introduced by reordering.
Therefore, we restrict our analysis to concurrent dependencies.
We describe the dependency order of concurrent transactions using the two lemmas below.

\begin{figure}[tp]
      \centering
      \begin{subfigure}{0.45\textwidth}
        \includegraphics[width=0.8\textwidth]{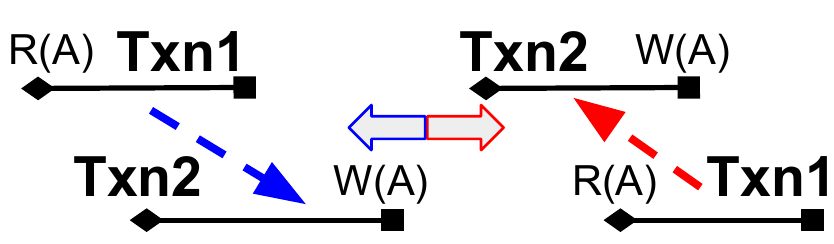}

      \end{subfigure}      
      \begin{subfigure}{0.45\textwidth}
        \includegraphics[width=0.8\textwidth]{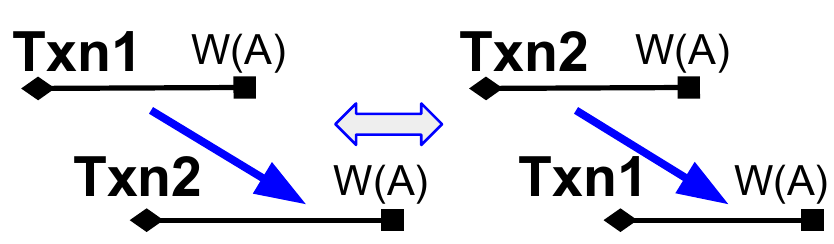}
      \end{subfigure}

      \caption{Dependency order preserves between \scenario{c-rw},
        \scenario{anti-rw} but not \scenario{c-ww} when switching commit order}
      \label{fig:theory_reorder}
\end{figure}

\begin{lemma} 
  \label{lemma:reorder_rw}
  If two transactions \texttt{Txn1} and \texttt{Txn2} exhibit \scenario{c-rw} or
  \scenario{anti-rw} dependency, switching their commit order does not affect
  their dependency order.
\end{lemma}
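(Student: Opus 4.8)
The plan is to observe that both \scenario{c-rw} and \scenario{anti-rw} arise from a single read--write conflict on a shared key, and that the resulting dependency always points from the reading transaction to the writing one, independently of commit order. So I would first normalise the situation: without loss of generality (relabelling if necessary) let \texttt{Txn1} be the transaction that reads the conflicting key $K$ and \texttt{Txn2} the one that writes $K$; the two scenarios then differ only in whether the reader or the writer is committed first. Using \Cref{defn:start-timestamp} and \Cref{defn:inconsistentExe}, \texttt{Txn1} sees the version of $K$ fixed at its snapshot \startTS{Txn1}. Because the pair is concurrent, \Cref{defn:concurrent-transaction} gives \startTS{Txn1} $<$ \commitTS{Txn2}, so \texttt{Txn1} reads a version of $K$ strictly older than the one \texttt{Txn2} installs; hence in any conflict-equivalent serial history the read must be placed before the write, i.e.\ the dependency order is \texttt{Txn1} $\to$ \texttt{Txn2}. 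This is exactly what separates the two cases only at the level of commit order: when the reader commits first the edge agrees with commit order (\scenario{c-rw}); when the writer commits first it opposes commit order (\scenario{anti-rw}); but the edge itself is reader-to-writer in both.

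The second step is to argue invariance under a commit-order switch. Reordering only permutes the commit positions (end timestamps) of pending transactions; it leaves read/write sets and, crucially, start timestamps untouched, since \startTS{Txn1} is the snapshot sequence number frozen at execution time. Thus after the switch \texttt{Txn1} still reads the same version of $K$ and \texttt{Txn2} still writes $K$. By \Cref{lemma:reorder_impact} the two transactions remain concurrent, so \startTS{Txn1} $<$ \commitTS{Txn2} continues to hold for whichever new commit order results, and the read still precedes the write in every serial history. Consequently the dependency order \texttt{Txn1} $\to$ \texttt{Txn2} is preserved, even though the switch may turn a \scenario{c-rw} into an \scenario{anti-rw} or vice versa --- which is precisely the claim.

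I expect the delicate point to be making the invariance airtight rather than the combinatorics: I must show the reader never observes the writer's installed value, both before and after reordering. This reduces to the stability of the inequality \startTS{Txn1} $<$ \commitTS{Txn2}, which I would pin on two facts --- that the start timestamp is immutable under reordering (so the read version is genuinely frozen), and that concurrency is preserved by \Cref{lemma:reorder_impact}. Once these are stated explicitly, the conclusion follows directly, and it will be worth contrasting this with \scenario{c-ww}, where the serialization order tracks commit order and therefore does flip, matching the accompanying figure.
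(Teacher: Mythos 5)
Your proposal is correct and follows essentially the same route as the paper: both arguments observe that \scenario{c-rw} and \scenario{anti-rw} are the same reader-to-writer edge distinguished only by commit order, so switching the commit order merely converts one into the other without flipping the dependency direction. The extra justification you give (invariance of the start timestamp and preservation of concurrency via \Cref{lemma:reorder_impact}) is left implicit in the paper's proof but is consistent with it.
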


\begin{proof}
  When \texttt{Txn1} and \texttt{Txn2} exhibit \scenario{c-rw} (or
  \scenario{anti-rw}) dependency, if we switch their commit order, they will
  exhibit \scenario{anti-rw} (or \scenario{c-rw}) dependency, as illustrated in the left side of \Cref{fig:theory_reorder}.
  Consequently, in both two cases, their dependency order remains the same,
  i.e., \texttt{Txn1} reads a key which will be written later by \texttt{Txn2}.
\end{proof}

\begin{lemma} 
  \label{lemma:reorder_ww}
  If two transactions \texttt{Txn1} and \texttt{Txn2} exhibit \scenario{c-ww}
  dependency, switching their commit order flips their dependency order.
\end{lemma}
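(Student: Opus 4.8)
The plan is to exploit the fact that, unlike a read-write anti-dependency, the direction of a write-write dependency has no read-before-write anchor and is therefore fixed entirely by which transaction's write is the surviving (installed) one. First I would recall the meaning of a \scenario{c-ww} dependency as depicted in \Cref{fig:theory_dependencies}: two concurrent transactions \texttt{Txn1} and \texttt{Txn2} write the same key, and (as \Cref{defn:concurrent-transaction} forces) each one's read snapshot precedes the other's commit, so neither reads the other's write. The dependency arrow between them records which of the two writes precedes the other in any equivalent serial history, and for a pure write-write conflict the only thing that can order the two writes is the order in which they are committed, i.e. their end timestamps.

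Next I would make this precise. Suppose without loss of generality that $\commitTS{Txn1} < \commitTS{Txn2}$. Then \texttt{Txn2}'s write is installed after \texttt{Txn1}'s, so \texttt{Txn1}'s value is overwritten and the dependency points from \texttt{Txn1} to \texttt{Txn2}. I would explicitly contrast this with the proof of \Cref{lemma:reorder_rw}, where the anti-dependency direction was pinned down by the read-before-write semantics (\texttt{Txn1} reads a version that \texttt{Txn2} later overwrites) and was hence insensitive to commit order; here the direction has no such semantic anchor and is governed solely by the commit order.

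Finally I would perform the switch: reordering exchanges the end timestamps so that now $\commitTS{Txn2} < \commitTS{Txn1}$, while \Cref{lemma:reorder_impact} guarantees that the two transactions remain concurrent (so they still form a \scenario{c-ww} rather than an \scenario{n-ww}). By the same installed-write argument, the surviving write is now \texttt{Txn1}'s, so the dependency points from \texttt{Txn2} to \texttt{Txn1}, the reverse of before; this is exactly the claimed flip, matching the right-hand illustration in \Cref{fig:theory_reorder}. The main obstacle I anticipate is not the algebra but justifying rigorously that a write-write conflict has no direction-fixing cause other than commit order. I would discharge this by invoking \Cref{defn:concurrent-transaction}: concurrency forces $\startTS{Txn2} < \commitTS{Txn1}$ (and symmetrically), so each transaction's snapshot is taken before the other installs its write, leaving commit order as the sole arbiter of the write-write direction.
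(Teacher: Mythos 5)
Your proof is correct and follows essentially the same route as the paper's: both arguments observe that the direction of a \scenario{c-ww} dependency is determined solely by which write is installed last, so exchanging the end timestamps reverses which transaction over-writes the other and hence flips the dependency arrow. The extra justification you supply (contrasting with the read-anchored direction in \Cref{lemma:reorder_rw} and invoking \Cref{lemma:reorder_impact} to keep the pair concurrent) is a sound elaboration of the same idea rather than a different approach.
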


\begin{proof}
  When \texttt{Txn1} and \texttt{Txn2} exhibit \scenario{c-ww} dependency,
  \texttt{Txn1} writes to a key which will be over-written by \texttt{Txn2}.
  If their commit order is switched, then \texttt{Txn2} and \texttt{Txn1} will
  exhibit \scenario{c-ww} dependency, as illustrated in the right side of
  \Cref{fig:theory_reorder}.
  Now, \texttt{Txn2} writes to a key which will be over-written by \texttt{Txn1}.
  Consequently, the dependency order of \texttt{Txn1} and \texttt{Txn2} is flipped.
\end{proof}

Finally, we present a theorem on reordering transactions containing a dependency
cycle. This theorem is utilized in \Cref{sec:concurrencyControl} to
design our novel fine-grained concurrency control algorithm.

\begin{theorem}
  \label{theory:unreorderable}
  \revision{A transaction schedule cannot be reordered to be
  serializable if there exists a cycle with no \scenario{c-ww} dependencies involving pending transactions.
  }
\end{theorem}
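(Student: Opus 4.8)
The plan is to recast the claim as a statement about which edges of the conflict-dependency graph a reordering can modify. By the conflict serializability theorem \cite{weikum2001transactional}, a schedule is serializable exactly when its dependency graph is acyclic; hence, to turn an unserializable schedule into a serializable one by reordering, every dependency cycle must be destroyed, and a given cycle can be destroyed only if some reordering reverses the direction of at least one of its edges. So the first step is to pin down precisely which edges can have their direction reversed.

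Next I would run the case analysis over the six canonical dependencies of \Cref{fig:theory_dependencies}, invoking the lemmas already established. First observe that reordering changes only commit order, not read/write sets, so it can never add or remove a conflict edge---it can only relabel or reorient an existing one. For the non-concurrent dependencies \scenario{n-ww}, \scenario{n-wr}, and \scenario{n-rw}, \Cref{lemma:reorder_concurrency} (together with the contrapositive of \Cref{proposition:concurrency}) places the two endpoints in distinct, already-committed blocks, so their relative order---and thus the edge direction---is frozen by immutability. For concurrent \scenario{c-rw} and \scenario{anti-rw} edges, \Cref{lemma:reorder_rw} shows that switching commit order merely swaps the label \scenario{c-rw} $\leftrightarrow$ \scenario{anti-rw} while preserving the dependency direction. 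For \scenario{c-ww} edges, \Cref{lemma:reorder_ww} shows switching commit order does flip the direction; but this flip is available only when both endpoints are pending, since if one endpoint is already committed in an earlier block, its commit precedes the pending transaction's no matter where the latter is placed in the new block (by \Cref{lemma:reorder_impact}), leaving the edge frozen. The conclusion of this step is that the only reversible edges are \scenario{c-ww} edges whose two endpoints are both pending transactions.

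Finally I would close the argument. Let $C$ be a cycle containing no \scenario{c-ww} dependency involving pending transactions. By the characterization just obtained, none of the edges of $C$ is reversible, so every reordering leaves each edge of $C$ pointing in the same direction; moreover, flipping \scenario{c-ww} edges that lie outside $C$ cannot affect $C$, because those edges are not part of it. Therefore $C$ survives as a cycle in the dependency graph of every schedule reachable by reordering, the graph stays cyclic, and by the conflict serializability theorem no reachable schedule is serializable---which is exactly the claim.

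The step I expect to be the main obstacle is the edge characterization, and within it the mixed \scenario{c-ww} case where one endpoint is pending and the other already committed: there I must argue carefully from the fixed commit time of the committed endpoint that the edge cannot be flipped, so that reordering pending transactions really does leave every non-both-pending \scenario{c-ww} edge intact. I also need to confirm that reordering never creates or deletes edges (only reorients or relabels them), so that destroying a cycle truly requires reversing one of its own edges rather than eliminating a transaction from it. Once these points are secured, the persistence of $C$ and hence the theorem follow immediately.
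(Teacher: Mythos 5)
Your proof is correct and follows essentially the same route as the paper's: partition the cycle's edges into those touching a committed transaction (frozen by blockchain immutability) and those between pending transactions (necessarily concurrent and, by hypothesis, only \scenario{c-rw} or \scenario{anti-rw}, whose dependency direction is preserved under commit-order switching by \Cref{lemma:reorder_rw}), so the cycle survives every reordering. You additionally make explicit two steps the paper leaves implicit---that reordering can only reorient, never create or delete, conflict edges, and that destroying a cycle therefore requires reversing one of its own edges---which tightens rather than changes the argument.
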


\begin{proof}
\revision{
  We classify the dependencies in the cycle into two categories. 
  The first category includes those involving at least one committed transaction in the dependecy. 
  Due to the immutability of blockchains, reordering does not impact these dependencies, because the relative commit order of two transactions is fixed.
  The second category includes all dependencies between a pair of pending transactions. 
  For each dependency, its corresponding transactions must be concurrent, otherwise, the preceding transaction would be committed. 
  Due to the fact that the pending transactions are concurrent and the absence of \scenario{c-ww}, the order switching can only happen between conflicting transactions with \scenario{c-rw} or \scenario{anti-rw}.
  Their dependency order preserves despite being reordered (\Cref{lemma:reorder_rw}).
  Hence, the cyclic schedule remains unserializable, as shown in
  \Cref{fig:theory_cycle_rw}.
  }
\end{proof}

\revision{However, a transaction schedule can be reordered to be serializable if there exists a cycle with one \scenario{c-ww} conflict between pending transactions. Due to \Cref{lemma:reorder_ww}, their dependency order can be flipped. We present this scenario in \Cref{fig:theory_cycle_ww}, where a cyclic schedule formed by \texttt{Txn1}, \texttt{Txn2} and \texttt{Txn3} becomes serializable by switching the commit order of \texttt{Txn2} and \texttt{Txn3}, which exhibit \scenario{c-ww} dependency.}

\begin{figure}[tp]
	\centering
    \begin{subfigure}{0.45\textwidth}
    	\centering
      \includegraphics[width=0.85\textwidth]{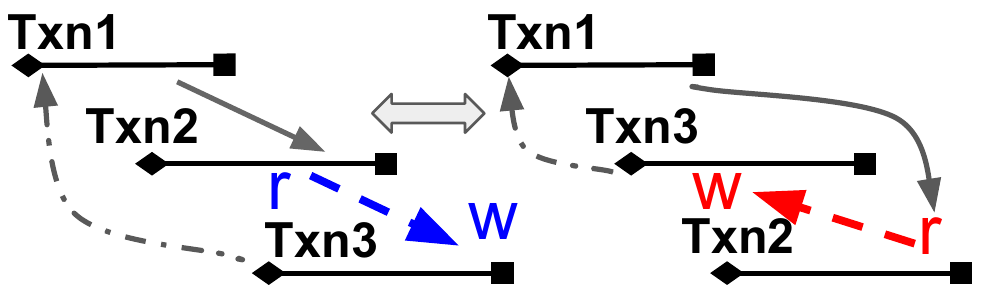}

      \caption{An unreorderable schedule without \scenario{c-ww}}
      \label{fig:theory_cycle_rw}

    \end{subfigure}
    \begin{subfigure}{0.45\textwidth}
    	\centering
      \includegraphics[width=0.85\textwidth]{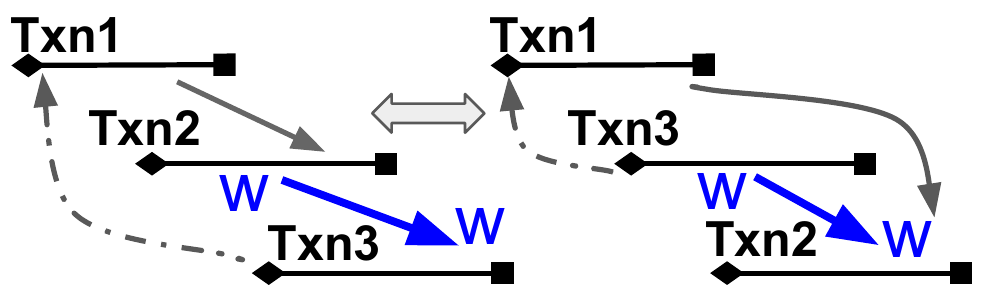}

      \caption{A reorderable schedule with \scenario{c-ww}}
      \label{fig:theory_cycle_ww}
    \end{subfigure}

    \caption{Transaction schedule reorderability}

\end{figure}

\subsection{Fine-grained Concurrency Control}
\label{sec:concurrencyControl}

\Cref{theory:unreorderable} states that a cyclic transaction schedule without
\scenario{c-ww} among pending transactions can never be serializable despite reordering.
Based on this insight, we formulate the following three steps for
fine-grained concurrency control in EOV blockchains.

\begin{itemize}[leftmargin=1.8em]

\item For a new transaction, we first consider all dependencies, except
  \scenario{c-ww}, among all pending transactions (including the new
  transaction).
  Then, we directly drop the new transaction if there is a dependency cycle.

\item On block formation, we retrieve the pending transaction order that
  respects all the computed dependencies.

\item Finally, we restore \scenario{c-ww} dependencies on pending transactions
  based on the retrieved schedule.

\end{itemize}

Note that \scenario{c-ww} dependency restoration is still necessary, 
as future unserializable transactions may encounter a cycle with a \scenario{c-ww}
dependency which involves committed transactions. 
But their commit as well as the dependency order is already fixed.
However, the dependency graph still remains acyclic after the restoration. 

We outline our fine-grained concurrency control in Algorithms
\ref{alg:simulation}, \ref{alg:txn}, and \ref{alg:blk}, while the implementation
details are presented in \Cref{sec:impl}.
We use the notation $A \unioneq B$ to represent the self-assignment with union $A := A \cup B$.
Here, we argue that the topological sort in \Cref{alg:blk} always has a solution
since the transaction dependency graph $G$ is guaranteed to be acyclic by \Cref{alg:txn}.
Even the sub-graph containing only the pending transactions, $P$, is a directed
acyclic graph and, hence, must have a topological order.

Compared to the reordering algorithm in {\fabricPlusplus}, ours is more fine-grained
because the unserializable transactions are aborted before ordering and the
remaining transactions are guaranteed to be serializable without being aborted.
Our reordering is no longer limited to a block's scope.
%
%
%
Another notable difference is that we determine the block snapshot at the start
of the simulation, while Fabric and {\fabricPlusplus} determine it based on the last read
operation.
We allow block commit during the contract simulation for more
parallelism, but this may introduce stale snapshots when previously read records
are updated by committed transactions during the simulation.


\revision{\subsection{Security Analysis}
\label{sec:securityanalysis}
Our reordering algorithm serves as a part of the ordering process and needs to be replicated on each honest orderer to form the ledger after the consensus service has established the transaction order. 
We assume the safety and liveness of the original consensus service under its security model, either crash-failure or byzantine-failure.
We now discuss whether both properties preserve after our reordering. 
}

\begin{algorithm}[tp]
  \caption{Contract simulation}
  \label{alg:simulation}
  \KwIn{Contract invocation context.}
  \KwOut{$readset$, $writeset$ are simulation results,\\
  \hspace*{3.4em} $b$ is the number of the block simulated on.}
  $b$ := fetch the number of the last block\;
  $readset, writeset$ := simulate the contract invocation on Block $b$
  snapshot; \hfill {// \Cref{sec:impl:snapshot_read}}
\end{algorithm}

\begin{algorithm}[tp]
  \caption{On the arrival of a transaction}
  \label{alg:txn}
  \KwData{$G$ is the transaction dependency graph with nodes $U$ and edges $V$,
    and $P$ is the pending transaction set.}
  \KwIn{$t$ is the transaction identifier, 
    $b$ is the number of the block simulated on, 
    and $readkeys$, $writekeys$ are accessed keys during simulation.}
  \KwOut{$reorderable$ property of $t$.}
  $dep$ := Compute $t$'s dependency except $c$-$ww$ among $P$ based on $G, b, readkeys,
  writekeys$; \hfill {// \Cref{sec:impl:dep}}\\
  $reorderable := true$ if no cycle is detected in $G$ with respect to $dep$,
  or $false$ otherwise;
  \hfill {// \Cref{sec:impl:graph}}\\
  \If{$reorderable$} {
    $P \unioneq \{t\}$\;
    $G.U \unioneq \{t\}$\;
    $G.V \unioneq dep$\;
  }
\end{algorithm}

\begin{algorithm}[tp]
  \caption{On the formation of a block}
  \label{alg:blk}
  \KwData{$G$ is the transaction dependency graph, and
    $P$ is the pending transaction set.}
  \KwOut{$s$ is the commit order of pending transactions.}
  $s$ := Topologically sort $P$ based on reachability in $G$\;
  $ww$ := Compute \scenario{c-ww} among $P$ with $s$\;
  $G.V \unioneq ww$;  \hfill {// \Cref{sec:impl:resintall}} \\
  $P := \emptyset$
\end{algorithm}
 
\revision{\textbf{Safety.} 
In the original Fabric design, there are four safety properties: \textit{agreement, hash chain integrity, no skipping}, and \textit{no creation}~\cite{androulaki2018hyperledger}. 
These properties require honest orderers to sequentially deliver consistent, untampered blocks in a ledger.
We claim that our approach preserves \textit{hash chain integrity} and \textit{no skipping} as we do not change the block formation procedure.
Next, \textit{no creation} holds because we do not introduce new transactions. 
Lastly, we achieve \textit{agreement} because we fully replicate the reordering on each orderer.
Moreover, we do not introduce non-determinism which may lead to execution bifurcation. 
As long as honest orderers perform the reordering individually from a consistent transaction stream, they shall produce identical ledgers. 

\textbf{Liveness.} 
Fabric defines liveness in terms of the \textit{validity} property, which mandates all broadcasted transactions to be included in the ledger. 
%
Our algorithm may compromise this liveness property as aborted transactions are excluded from the ledger. 
However, we propose the following approach to prevent abusive usage. 
To be specific, in the consensus protocol, the transaction order is tentatively proposed by a leader node. 
%
%
When this order is accepted by the other nodes, it becomes the input of our reordering approach. 
%
Hence, the order is controlled by the leader, which may hinge on the publicly available reordering algorithm to maliciously defer certain transactions.
Suppose the malicious leader detects an undesirable transaction \texttt{TxnT} which reads and writes a record against the state snapshot of block N. 
The leader, using both a proxy peer and a proxy client, can immediately prepare another transaction \texttt{TxnT'} which reads and writes the same record against block N. 
%
%
Next, the leader places \texttt{TxnT'} ahead of \texttt{TxnT} during ordering. 
The other orderers, unaware of this manipulation, may accept this ordering.
Assuming \texttt{TxnT'} passes the reorderability test in \Cref{alg:txn}, each honest orderer will abort \texttt{TxnT}.
It is because these two transactions form an unreorderable cyclic schedule, namely \texttt{TxnT'} depends on \texttt{TxnT} with \scenario{c-rw} and \texttt{TxnT} on \texttt{TxnT'} with \scenario{anti-rw}.
%
%
The crux of the mitigation is to hide the transaction's details, such as accessed records, before the transaction order is established. 
For example, we allow clients to send only the transaction hash to the orderers.
Moreover, clients have incentives to do so to avoid the above manipulation.
After the sequence of a transaction hash is decided, its details are then disclosed to orderers for reordering. 
We remark that this approach also defers malicious clients from exploiting the reordering by mutating the transaction contents.
It is because clients have already made a security commitment by publishing the transaction hash. 
}



\section{Implementation}
\label{sec:impl}

\begin{figure}[tp]
  \center
	\includegraphics[width=0.5\textwidth]{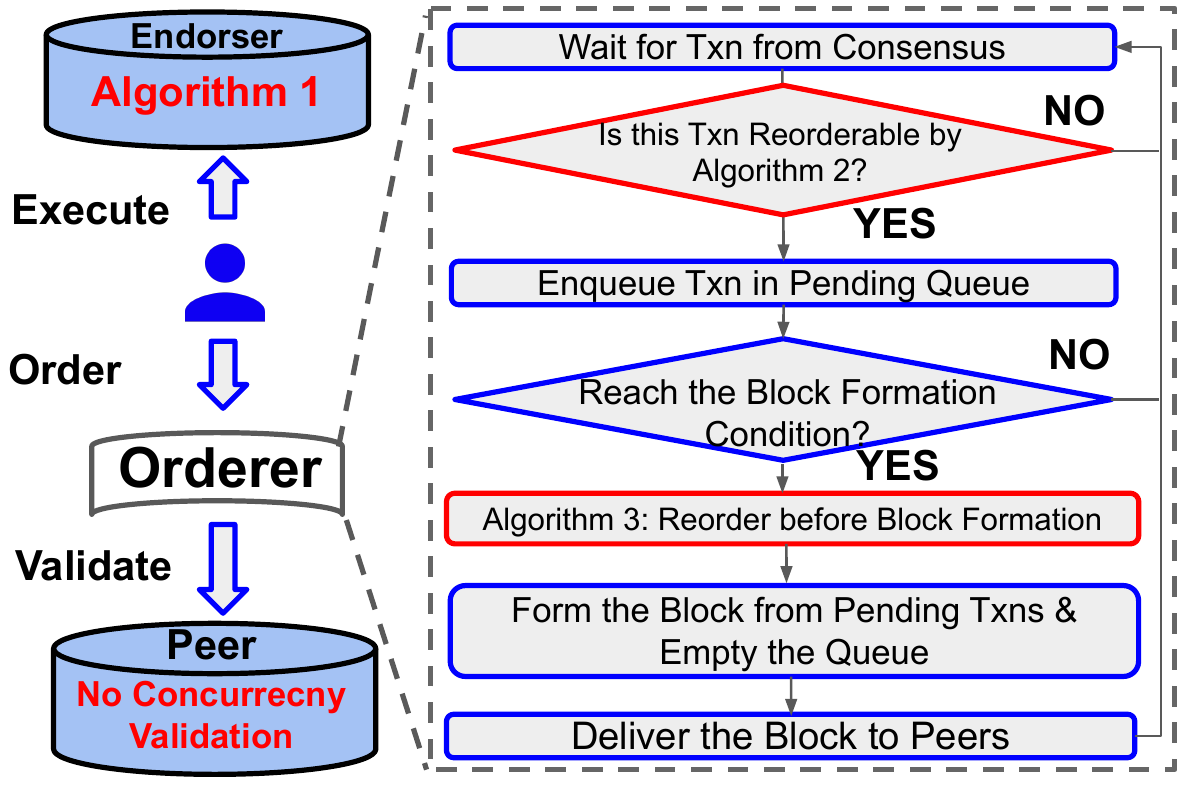}
	\caption{\revision{The integration of our approach in the ordering service}}
	\label{fig:impl:fabricx}
\end{figure}

\revision{\subsection{Overview}
We illustrate in \Cref{fig:impl:fabricx} the integration of our proposed fine-grained concurrency control in the ordering service of an EOV blockchain. 
In particular, we implemented our approach on top of both Hyperledger Fabric and FastFabric.
For simplicity, we only show a single orderer and a single peer in the EOV pipeline, but the algorithms are replicated on each node. 
While the majority of our implementation is done in orderers, \Cref{alg:simulation} is integrated in the peers for snapshot-consistent transaction execution during the endorsement phase.
In the ordering phase, we employ \Cref{alg:txn} to test the reorderability of an incoming transaction after the consensus decides its commit order.
\Cref{alg:blk} performs the abort-free reordering immediately before pending transactions are batched into a block. 
We remark that \Cref{alg:txn} and \Cref{alg:blk} are far from implementation-friendly to system developers, as both employ an abstract dependency graph.
In light of this, we present the details of designing the dependency graph and efficient operations on it.}

\subsection{Snapshot Read}
\label{sec:impl:snapshot_read}

We first describe the snapshot mechanism used by \Cref{alg:simulation}.
We rely on the storage snapshot mechanism to ensure each contract invocation is
simulated against a consistent state.
%
%
Specifically, after a block is committed, we create a storage snapshot and associate
it with the block number.
Each transaction, before its simulation, must acquire the number of the latest
block, as shown in \Cref{alg:simulation}.
%
%
Staled snapshots without any simulation are periodically pruned.
%
%
This design allows more parallelism across contract simulation in the Execution
phase and block commit in the Validation phase.
In contrast, vanilla Fabric uses a read-write lock to coordinate these two
phases.

\subsection{Dependency Resolution}
\label{sec:impl:dep}

To compute the dependency graph in \Cref{alg:txn}, we introduce two
multi-versioned storages in the orderers to identify committed transactions.
These storages are implemented in LevelDB and represent
\textit{CommittedWriteTxns} (\textit{CW}) and \textit{CommittedReadTxns}
(\textit{CR}), respectively.
Each key of \textit{CW} consists of the concatenation of the record key and the
commit sequence of the transaction updating the value.
For example, if \texttt{Txn1} with commit sequence $(3,2)$ writes to key
\texttt{A}, \textit{CW} has an entry \dbentry{A}{3}{2}{Txn1}.
Similarly, each key of \textit{CR} consists of the concatenation of the record key and
the commit sequence of the transaction reading that key's latest value.
For instance, the entry \dbentry{A}{4}{1}{Txn7} indicates that \texttt{Txn7} is
the first transaction in block 4 which reads the latest value of key \texttt{A}.
In both \textit{CW} and \textit{CR}, we place the record key prior to the commit
sequence to efficiently support point query and range query.
For example, the query $CW$.$Before(key, seq)$ returns the last committed
transaction updating $key$ with the commit sequence earlier than $seq$.
Similarly, $CW$.$Last(key)$ returns the last committed transaction updating
$key$.
For the range query, $CW[key][seq:]$ returns all committed transactions from
$seq$ onward that update $key$.

We maintain two in-memory indices, $PendingWriteTxns$ ($PW$) and
$PendingReadTxns$ ($PR$), to respectively store the keys for the write and read
sets of pending transactions.
Consider a new transaction $txn$ that starts at $startTS$ with read keys $R$ and
write keys $W$. All the dependencies of transaction $txn$ are computed as
follows.

\begin{center}
  \def\arraystretch{1.3}
  \begin{tabular}{lcl}
    $\scenario{anti-rw}(txn)$
    & $=$
    & $\bigcup_{r \in R}^{} CW[r][startTS:] \cup PW[r]$
    \\

    $\scenario{rw}(txn)$
    & $=$
    & $\bigcup_{w \in W}^{} CR[w]  \cup PR[w]$
    \\

    $\scenario{n-wr}(txn)$
    & $=$
    & $\bigcup_{r \in R}^{} CW.Before(r, startTS)$
    \\

    $\scenario{ww}(txn)$
    & $=$
    & $\bigcup_{w \in W}^{} CW.Last(w)$
    \\
  \end{tabular}
\end{center}

Note that we ignore \scenario{ww} dependencies between pending transactions and do not
differentiate whether \scenario{ww} and \scenario{rw} are concurrent or not.
This is because non-concurrent transaction may be part of a cycle.
We then compute the predecessor transactions of $txn$ as $\scenario{ww}(txn)
\cup \scenario{n-wr}(txn) \cup \scenario{rw}(txn)$, and successor transactions
as $\scenario{anti-rw}(txn)$.

\subsection{Cycle Detection}
\label{sec:impl:graph}

We now discuss how we represent the dependency graph $G$ to detect cycles and
achieve serializability.
%
%
We face two design choices.
On the one hand, we could maintain only the immediate linkage information for
each transaction and then perform graph traversal for cycle detection.
On the other hand, we could maintain the entire reachability information among
each pair of transactions.
But the latter approach shifts the overhead from computation to space
consumption.
%
%
We achieve a sweet spot by maintaining the immediate successors of
a transaction ($txn.succ$) and represent all transactions that can reach $txn$
with a bloom filter, referred to as $txn.anti\_reachable$.
Cycle detection becomes straightforward by testing
$p.anti\_reachable(s)$ for each pair $(p,s)$ consisting of a
predecessor and a successor of $txn$.

We use bloom filters because they are memory efficient and can perform
fast union.
Union is extensively used to update the reachability information for each
transaction, as shown in \Cref{alg:reachability}.
Since a bloom filter internally relies on a bit vector, the set union can be
fast computed via the bitwise OR operation.
However, bloom filters are known to report false positives~\cite{bloom1970space}.
%
%
If the filters report such false positives for a pair of adjacent transactions
to $txn$, we preventively abort $txn$.
If they report negative for all pairs, then $txn$ does not belong to any
cycle in $G$.

\Cref{alg:reachability} entails the relatively expensive traversal of all
reachable transactions from $txn$.
However, this cost is bearable, since the traversal is unnecessary when
$\scenario{anti-rw}(txn)$ is empty.
This is often the case under non-skewed workloads.
%
Moreover, we reduce the cost of traversal by pruning the dependency graph, as described in \Cref{sec:impl:optimization}.




\begin{algorithm}[tp]
    \caption{Reachability update for transaction $txn$}
    \label{alg:reachability}
    \KwData{$G$ is the transaction dependency graph}
    \KwIn{$M$ is the number of next block to be committed, 
      $pred$ is $txn$'s immediate predecessor transactions, and
      $succ$ is $txn$'s immediate successor transactions.}

    $txn.anti\_reachable$ := $\emptyset$\;
    \For{$p$ \textup{in} $pred$} {
        $p.succ \ \unioneq \{txn\}$\;
        $txn.anti\_reachable \ \unioneq
        p.anti\_reachable$\; }
    \For{$s$ \textup{reachabale from} $succ$ \textup{in} $G$} {
        $s.anti\_reachable \ \unioneq
        txn.anti\_reachable$;\\
        $s.age := M$;\label{algo:age}}
\end{algorithm}

Another issue in \Cref{alg:reachability} is the constant growth of the
$anti\_reachable$ filter.
In practice, we observe that the false positive rate of a single bloom filter
grows to an intolerable ratio.
To address this issue, we use two bloom filters with relay.
Each transaction is associated with one bloom filter capturing transactions
committed after block $M$ and another bloom filter capturing transactions after
block $N$.
Suppose block $C$ is the earliest block which contains a committed transaction
in $G$.
We maintain $M < C < N$ and use the first bloom filter for testing reachability.
Whenever $C$ grows to $M < N < C$, the first bloom filter is emptied and it
starts to collect transactions from the current block.
We then use the second filter for testing reachability.
In this manner, we restrict the number of transactions represented by a bloom
filter within a certain block range so that the false positive rate remains
acceptable. 
\revision{For safety, honest orderers must use the same $M$ and $N$ for exact replication.}

\subsection{Dependency Restoration}
\label{sec:impl:resintall}


Next, we present our method to install \scenario{ww} dependencies into the dependency
graph $G$ based on the derived commit sequence, which is a topological order of
the pending transactions $P$ according to the reachability in $G$.
One prominent issue is that the reachability of a transaction may be affected by
multiple \scenario{ww} dependencies from various updated keys.
But we want the reachability modification to take place within a single
iteration for efficiency.
\Cref{alg:restoration} outlines the major steps of the restoration of
\scenario{ww} dependencies.
We further explain this algorithm using the example in \Cref{fig:impl:restore}.

\begin{algorithm}[tp]
  \caption{Restoration of \scenario{ww} within pending transactions based on
    the computed commit sequence}
    \label{alg:restoration}
    \KwData{$G$ is transaction dependency graph.}
    \KwIn{$seq$ is committed sequence of pending transactions, 
      $PW$ is the index that associates updated keys with pending
      transactions.}
    $head\_txns := \emptyset$\;
    \For{($key, txns$) \textup{in} $PW$\label{algo:iter_key}}
      {
        Sort $txns$ based on the relative order in $seq$;\\
        $(txn1, txn2)$ := the first pair in $txns$ such that $txn1 \notin
        txn2.anti\_reachable$;\label{algo:pair}\\
        $txn2.anti\_reachable \ \unioneq
        txn1.anti\_reachable$;\\
        $head\_txns \ \unioneq \{txn2\}$;
      }
    \For{$txn$ \textup{in the topologically-ordered
         iteration of all txns reachable from} $head\_txns$
         \label{algo:iteration}}
      {
        \For{$t$ \textup{in} $txn.succ$}
          {
            $t.anti\_reachable \ \unioneq txn.anti\_reachable$;
          }
      }
\end{algorithm}

For each $key$ to be updated by pending transactions ($PW$), we topologically
sort its associated transactions and select the first pair that is not yet
connected in the reachability filter.
In such a pair, the second transaction can be reached from all the predecessors
of the first transaction.
There can be a scenario where transactions in a pair are already connected in
the reachability filter, which makes the restoration redundant.
For example, this happens with \texttt{Txn0} and \texttt{Txn3} in
\Cref{fig:impl:restore}.
For transactions that are not yet connected, we need to update their
successors.
To do this efficiently, we keep the transactions in a set ($head\_txns$) and
update their successors based on the topological order.
Thereby, we avoid updating the information multiple times during the iteration
in line~\ref{algo:iter_key}.
For example, \texttt{Txn8} in \Cref{fig:impl:restore} is reachable through
the update of both key \texttt{A} and \texttt{B}.
Using our algorithm, the reachability information is updated once.


\begin{figure}[tp]
  \center
	\includegraphics[width=0.5\textwidth]{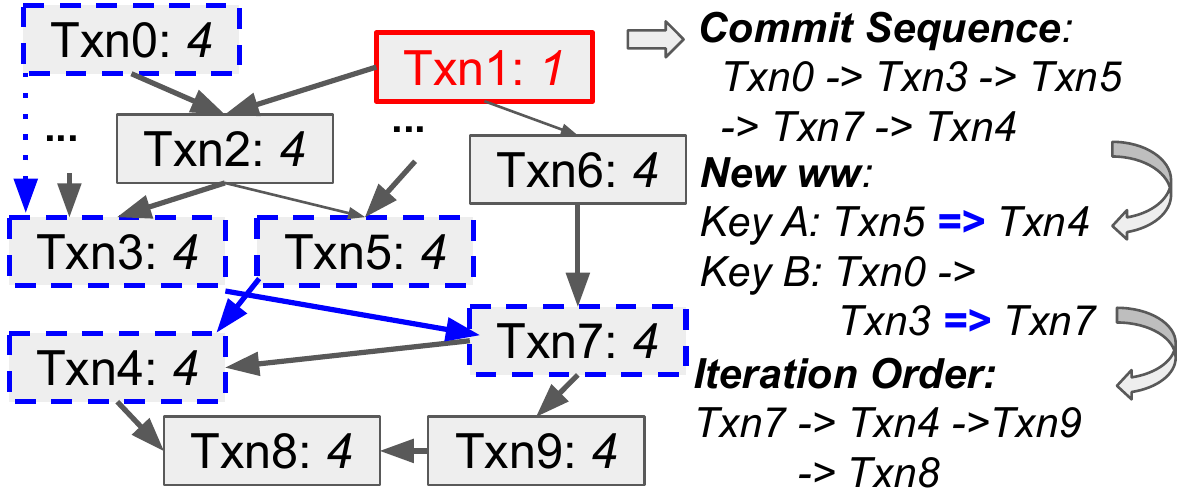}
	\caption{Example of a dependency graph with \textcolor{blue}{pending
      transactions} (marked with blue dashed border), the commit sequence,
    \textcolor{blue}{new \scenario{ww} dependencies}
    (marked with blue solid line) and the topologically-sorted iteration order.
    We do not consider the \scenario{ww} dependency between Txn0 and Txn3
    (marked with blue dotted line), as it is implicit.
    \textcolor{red}{Txn1} in red is subject to
    pruning due to staleness. The transaction age is in italic.}
	\label{fig:impl:restore}
\end{figure}

\subsection{Dependency Graph Pruning}
\label{sec:impl:optimization}


Since graph $G$ can grow quickly, we prune transactions that either (i) are simulated against very old snapshots or (ii) cannot affect pending transactions.
For the first case, we introduce a parameter called $max\_span$ to limit the block span\footnote{If a transaction is simulated against block $M$ and committed in block $M+1$, its block span is 1.} for
a transaction.
If the number of the next block is $M$, we compute the \textit{snapshot
  threshold} as $H = M - max\_span$.
Any transaction simulated against block $H$ or earlier is aborted.
For the second case, we define the \textit{age} of a transaction $txn$ to be the sequence
number of the last committed block containing at least one transaction reachable
from $txn$ in $G$.
When the snapshot threshold is greater than $txn$'s age, future transactions
cannot be concurrent with any transaction that can reach $txn$.
In this case, the \scenario{anti-rw} dependency will not happen, and this rules
out any unserializable schedule containing $txn$.
Therefore, $txn$ can be safely pruned from $G$.
We facilitate the pruning by arranging all transactions in $G$ into a priority
queue weighted by age.
For new transaction to be committed in block $M$, we increase the age of the
transactions reachable from it to $M$ during the traversal in
\Cref{alg:reachability} (line~\ref{algo:age}).
\revision{For security, all orderers must use the same value for $max\_span$.}





\newcommand{\foccs}{Focc-s}
\newcommand{\foccl}{Focc-l}

\section{Experiments}
\label{sec:experiment}

%
%
\subsection{Systems and Setup}

\revision{
First, we implement our approach on top of Hyperledger Fabric 1.3 and name the resulting system \textit{{\fabricSharp}}.
}
We compare {\fabricSharp} with the vanilla Fabric \cite{androulaki2018hyperledger}, {\fabricPlusplus} \cite{sharma2019blurring},
and two new systems which we developed by directly adopting OCC techniques from databases to Fabric. 
The first system, which we call \textit{\foccs}, follows the standard serializable OCC approach in ~\cite{CahillRF08}. 
This approach considers a dangerous pattern formed by two consecutive concurrent read-write conflicts with at least one \textit{anti-rw}. 
We modify our \Cref{alg:txn} such that incoming transactions with a \textit{c-ww} conflict or a dangerous pattern are immediately aborted. 
{\foccs} does nothing on block formation.
The second system is \textit{\foccl}, which uses a recent OCC technique~\cite{ding2018improving}, based upon which we construct the read-write
dependency graph and apply its \textit{Sort-Based Greedy Algorithm} in
\Cref{alg:blk} for reordering. 
{\foccl} does not filter any transactions in \Cref{alg:txn}.
\revision{
Second, we implement our fine-grained concurrency control on top of FastFabric [20], an orthogonal Fabric improvement, and name the resulting system \textit{{\ffabricSharp}}.
We aim to investigate the performance of {\ffabricSharp} on a real production workload with low contention.
In both sets of experiments, we deploy two \textit{orderers}, three Kafka nodes and four \textit{peers}, each on a physical machine with Intel Xeon E5-1650 3.5GHz CPU and 32GB RAM. 
The machines are connected via 1Gb Ethernet.
We configure the smart contract to be endorsed (executed) by a single peer.
Any of the four peers can serve as the endorser to spread the workload. 
}
The experiments are run at least three times and the average values are reported.

\subsection{Workloads and Benchmark Driver}
We use the same workloads that evaluate {\fabricPlusplus} \cite{sharma2019blurring},
which are based on the Smallbank benchmark.
\revision{
A transaction reads and writes 4 bank accounts, respectively, out of 10k accounts.
We set 1\% of them as hot accounts.}
Each read has a certain probability to access the
hot accounts, controlled by the \textit{Read hot ratio} parameter. 
Similarly, writing to hot accounts is controlled by the \textit{Write hot ratio}. 
We introduce two more workload parameters, namely \textit{Client
  Delay} and \textit{Read Interval}.
The former controls the delay of a client's broadcast to \textit{orderer} after
it receives the execution results from a \textit{peer}.
This parameter simulates the network transmission delay at the client side. 
The latter simulates computation-heavy transactions by controlling the interval
between consecutive reads. 
\Cref{tab:parameter} tabulates all the parameters, with the default value underlined. 
\revision{
We fix $max\_span$ to 10 and the request rate to 700 tps.
This is because Fabric can sustain a maximum raw throughput of around 700 tps on our setup, as shown in \Cref{fig:intro}.
Unless otherwise specified, all reported throughputs denote the \textit{effective throughput}, which represents the transactions that pass the serializability check and persist their states.}


\begin{table}
	\centering
	\caption{Experiment parameters}

	\label{tab:parameter}
	\begin{tabular}{@{}ll@{}}
	\toprule
	\textbf{Parameter}
  & \textbf{Value} \\

	\midrule

	\# of transactions per block
  & 50, 100, \underline{200}, 300, 400, 500 \\

  Write hot ratio (\%)
  & 0, \underline{10}, 20, 30, 40, 50 \\

	Read hot ratio (\%)
  & 0, \underline{10}, 20, 30, 40, 50 \\

  Client delay (x100 ms)
  & \underline{0}, 1 ,2 ,3, 4, 5 \\

	Read interval (x10 ms)
  & \underline{0}, 4, 8, 12, 16, 20 \\

	\bottomrule
	\end{tabular}
\end{table}

\subsection{The Performance of \fabricSharp}


\begin{figure}[tp]
	\centering
    \begin{subfigure}{0.40\textwidth}
      \includegraphics[width=0.8\textwidth]{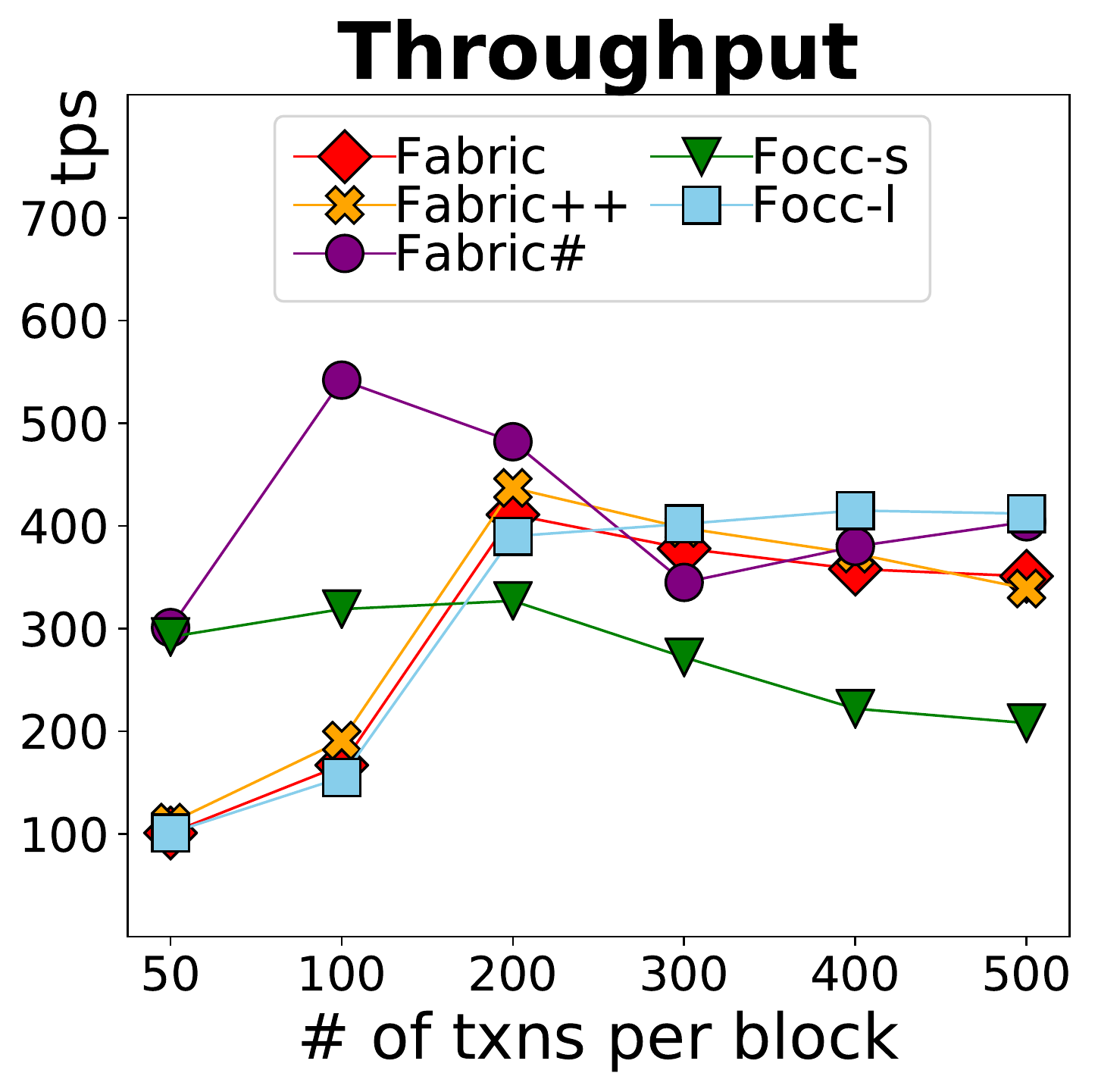}
    \end{subfigure}
    \begin{subfigure}{0.40\textwidth}
      \includegraphics[width=0.8\textwidth]{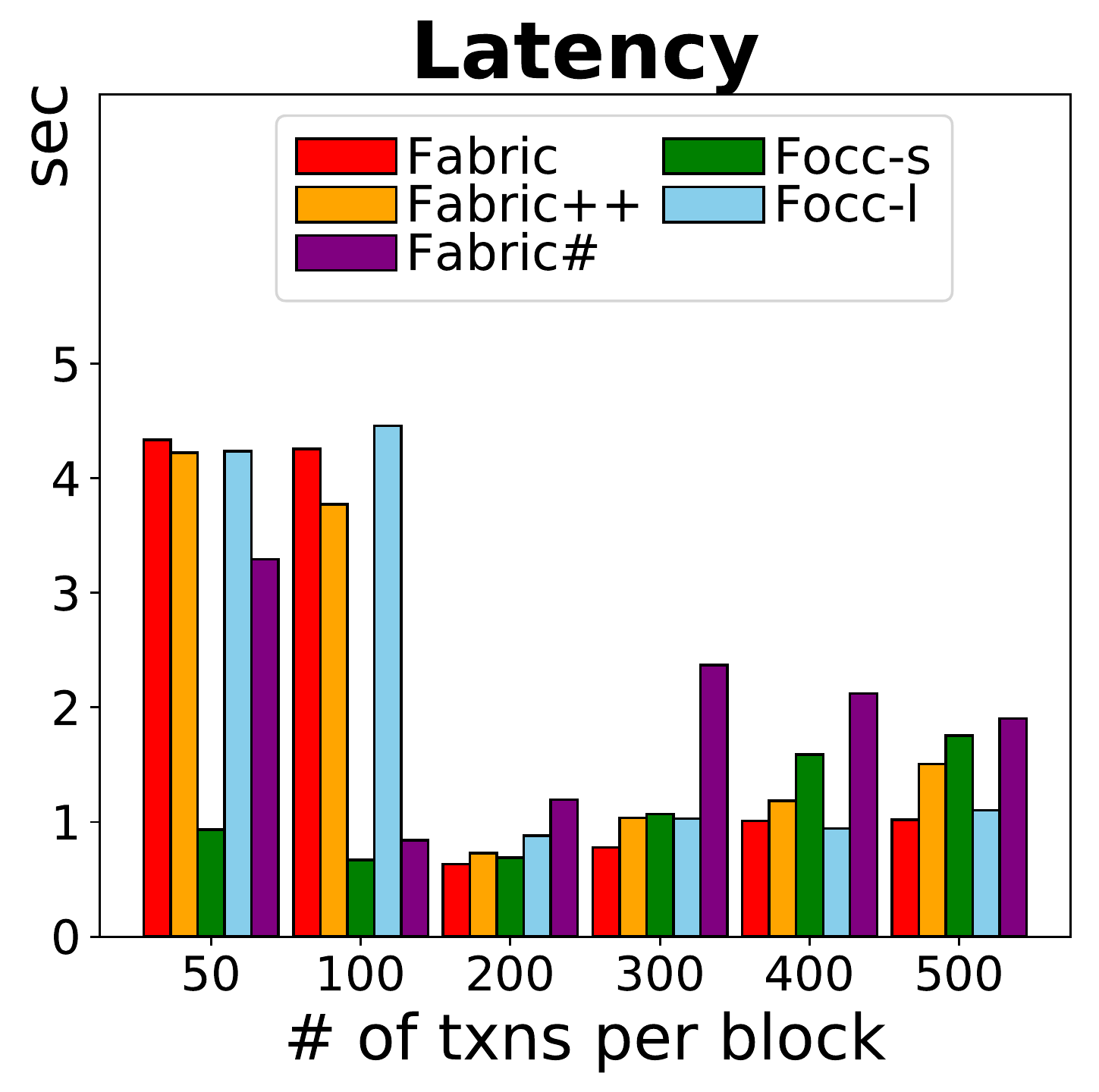}
    \end{subfigure}

    \caption{Performance under varying block size}
    \label{chart:blk_size}

\end{figure}

\textbf{Block Size.}
\revision{We first determine the block size that leads to the highest throughput for each system, and use these sizes for the remainder of the experiments.
\Cref{chart:blk_size} shows that the highest throughput (542 tps) is achieved by {\fabricSharp} when the block size is set to 100 transactions.
In contrast, Fabric, {\fabricPlusplus}, {\foccs} and {\foccl} reach their peak performance, 411, 437, 327, and 415 tps, respectively, when a block is limited to 200, 200, 200, and 400 transactions, respectively. 
Hence, our {\fabricSharp} achieves 25\% improvement in throughput compared to the state-of-the-art {\fabricPlusplus}.
%
%

Contrary to our expectation, {\fabricPlusplus} does not achieve higher throughput with larger blocks, even though there are more transactions available for reordering. 
We attribute it to the longer latency that intensifies the contention, leading to more unserializable transactions. 
On the one hand, it takes longer to form a block.
On the other hand, the reordering before block formation requires more time because there are more transactions.
For example, we observe that reordering in {\fabricPlusplus} takes 4.3ms with 50 transactions per block and 401ms with 500 transactions per block. 
In contrast, {\foccl} takes 0.12ms and 5.19ms, respectively, due to its light-weight approach, even though it similarly constructs a dependency graph.
This explains why {\foccl} has shorter delay and performs better on larger blocks. 
But when the block size is smaller than 200, the throughput of both {\fabricSharp} and {\foccs} is significantly higher compared to Fabric, {\fabricPlusplus}, and {\foccl}.
It is because of the preventive abort applied by both {\fabricSharp} and {\foccs} during the Ordering phase before reordering.
This alleviates the congestion in the Validation phase, which is the bottleneck.
In contrast, Fabric, {\fabricPlusplus}, and {\foccl} exhibit high latency with smaller blocks because many unserializable transactions are included in the ledger and they overload the Validation phase.
}

\begin{figure}[tp]
	\centering
    \begin{subfigure}{0.40\textwidth}
      \includegraphics[width=0.8\textwidth]{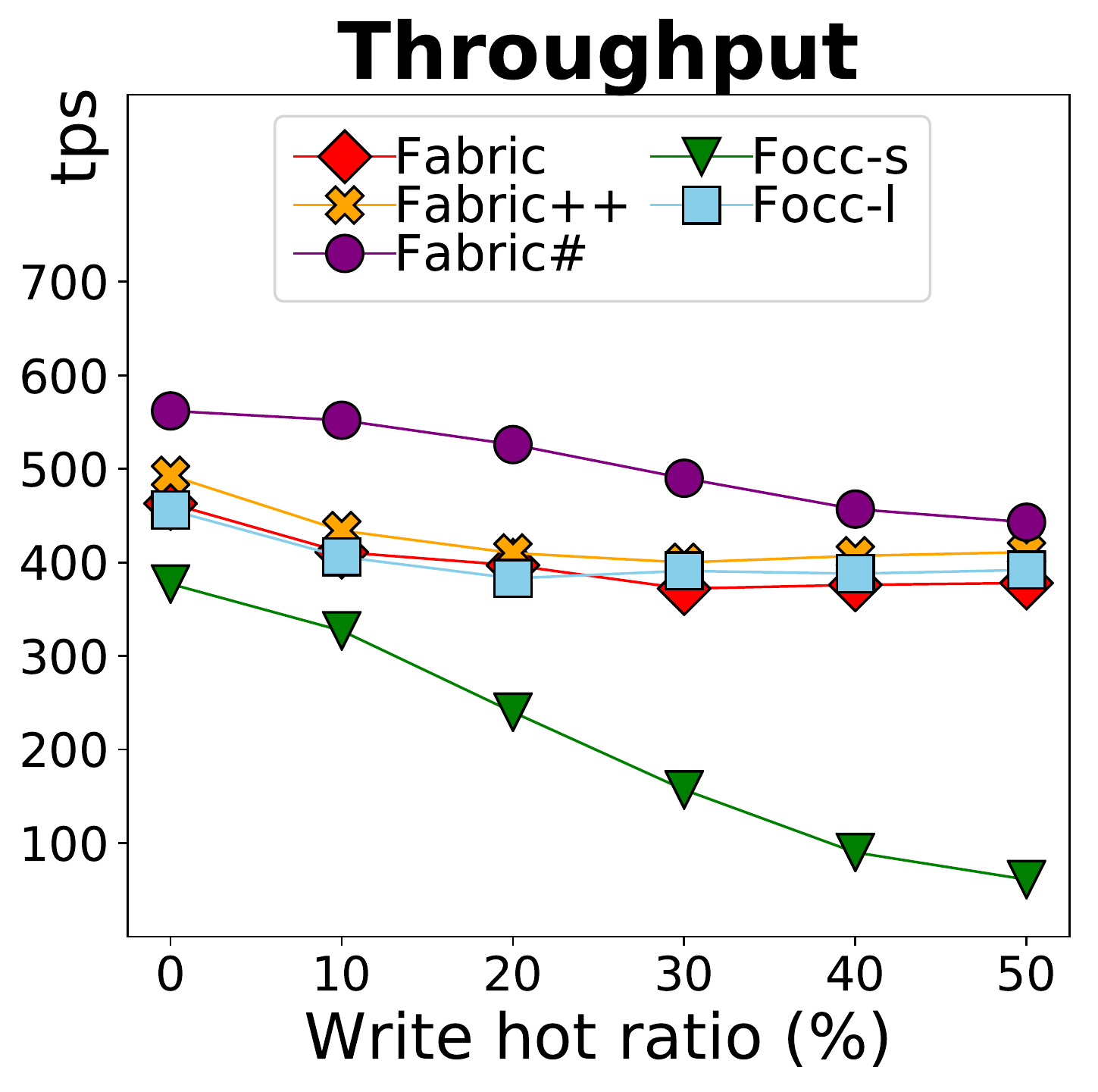}
      \label{chart:write_hot_thruput}
    \end{subfigure}
    \begin{subfigure}{0.40\textwidth}
      \includegraphics[width=0.8\textwidth]{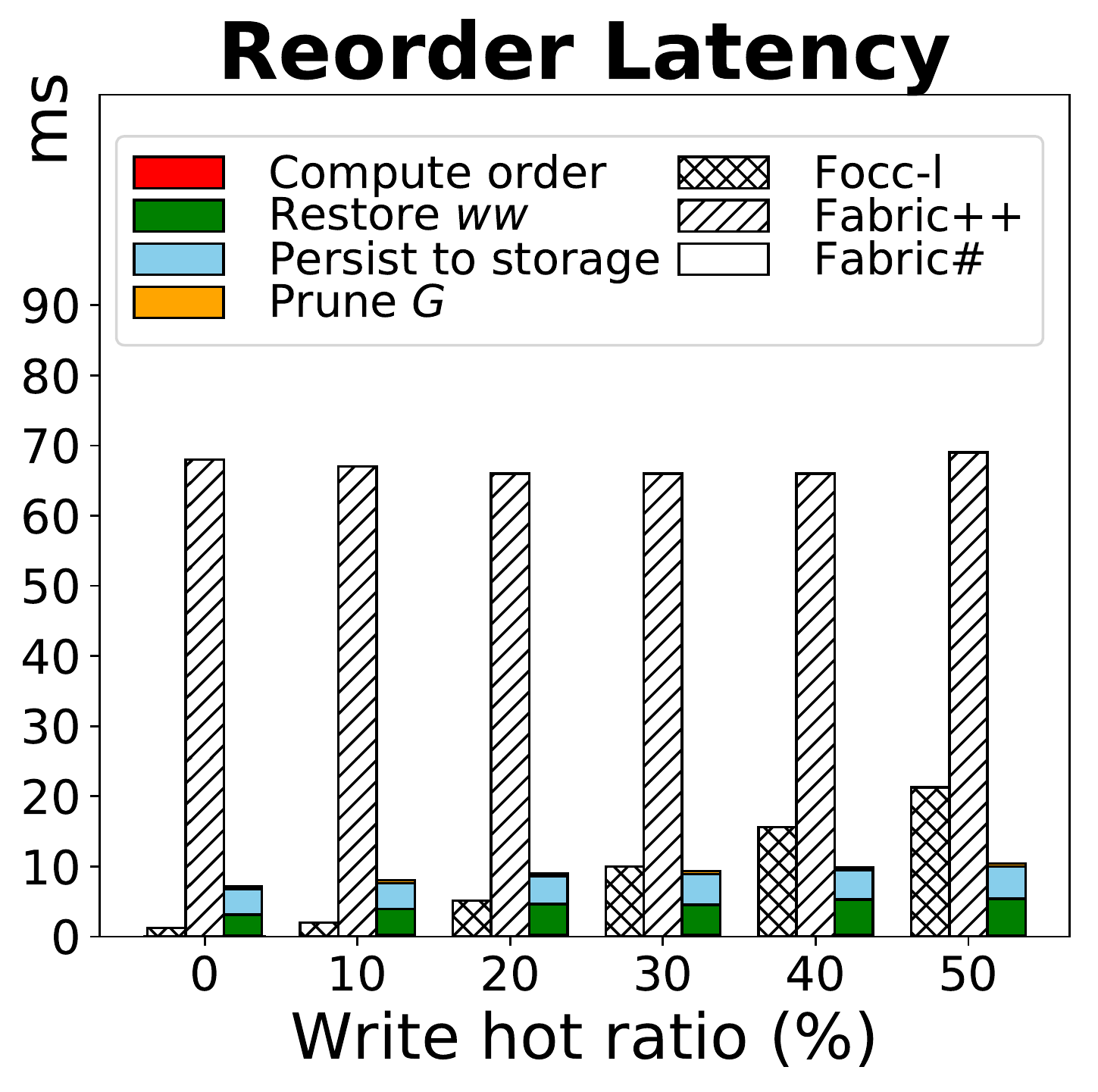}
      \label{chart:write_hot_blk_delay}
    \end{subfigure}

    \caption{Throughput and reordering latency under varying write hot ratio}
    \label{chart:write_hot}

\end{figure}

\textbf{Write Hot Ratio.}
To evaluate the effect of write-write conflicts, we concentrate more write operations into a fixed number of hot accounts.
The throughput of {\fabricSharp} remains the highest among all the systems, as shown in \Cref{chart:write_hot} (left). 
As expected, the throughput of {\foccs} drops significantly due to its prevention on \textit{c-ww}. 
\revision{
We also observe that the reordering latency of {\fabricPlusplus} is constantly large, 
while this delay in {\foccl} is smaller and proportional to the increasing skewness.
This is because {\foccl} iterates through the dependency graph of pending transactions in rounds. 
In each round, its \textit{Sort-Based Greedy Algorithm} keeps pruning transactions until there are only transactions without dependencies.
In contrast, {\fabricPlusplus} computes all the cycles and determines the transactions to be aborted in batch mode. 
Hence, its reordering procedure is less sensitive to workload skewness compared to {\foccl}. 
\Cref{chart:write_hot} shows that the reordering latency in {\fabricSharp} (\Cref{alg:blk}) is low.
This is because {\fabricSharp} shifts most of the work (e.g., the dependency graph maintenance) to \Cref{alg:txn} on the transaction arrival. 
We notice that a large ratio (\~50\%) of the reordering delay in {\fabricSharp} is for the restoration of \textit{ww} conflicts, and this ratio increases with higher write hot ratio. 
%
}
\begin{figure}[tp]
	\centering
    \begin{subfigure}{0.40\textwidth}
      \includegraphics[width=0.8\textwidth]{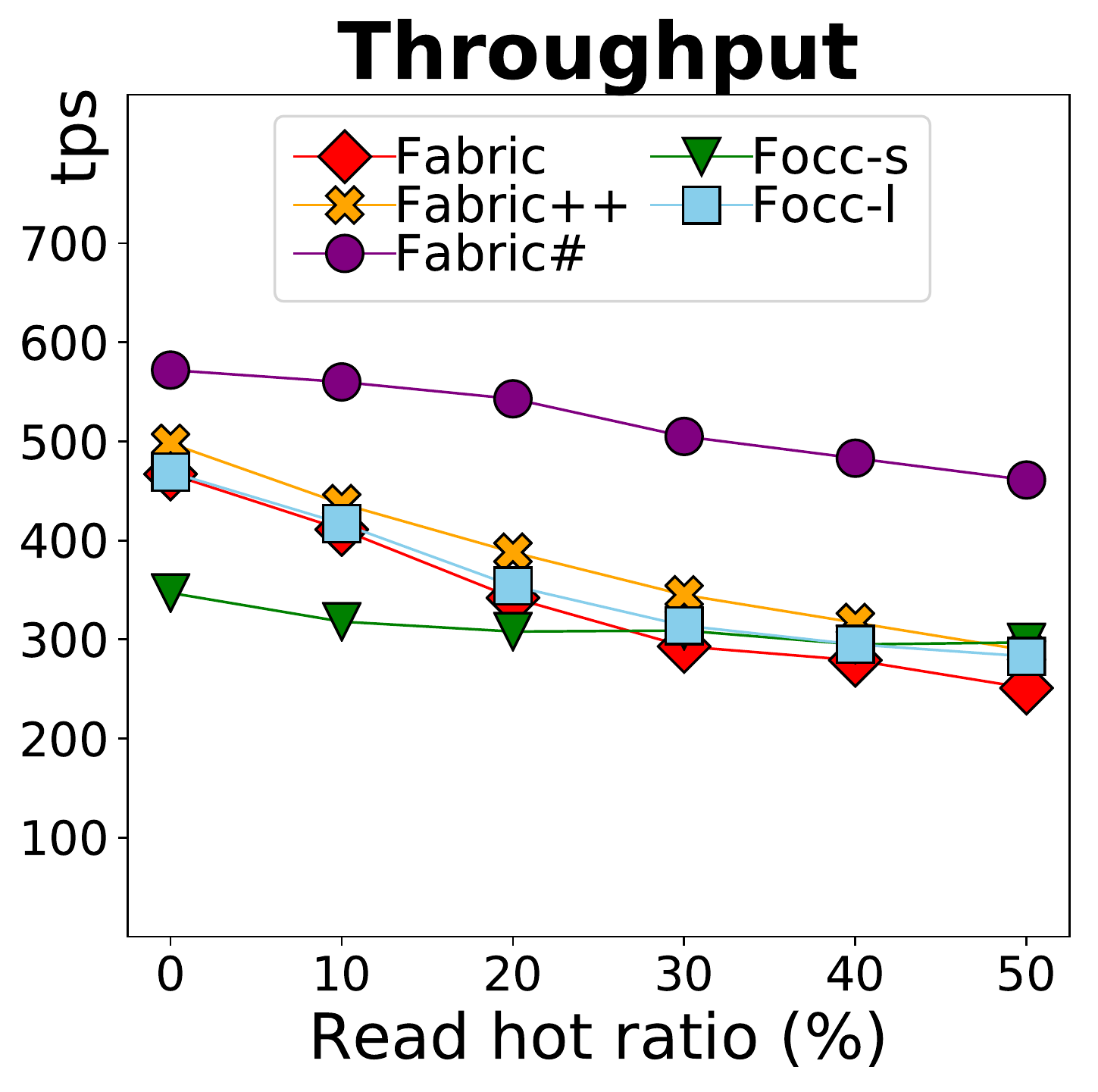}
    \end{subfigure}
    \begin{subfigure}{0.40\textwidth}
      \includegraphics[width=0.8\textwidth]{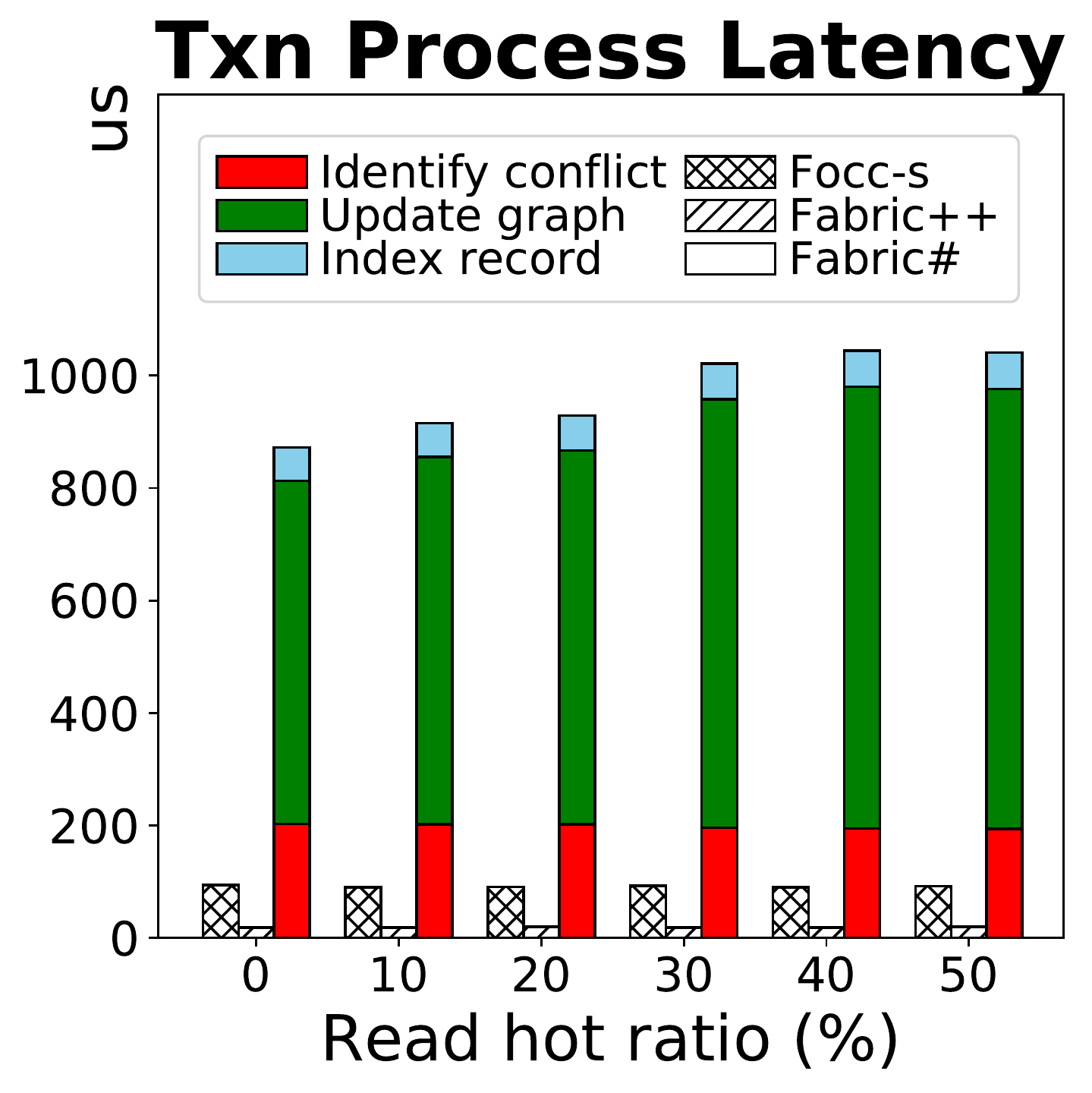}
    \end{subfigure}

    \caption{Throughput and transaction processing latency under varying read
      hot ratio}

    \label{chart:read_hot}
    
    \vspace{-10pt}
\end{figure}

\textbf{Read Hot Ratio.}
We increase the read hot ratio to generate more read-write conflicts in the workload. 
As explained in \Cref{theory:unreorderable}, dependency cycles with 
these conflicts can never be reordered to become serializable. 
Consistent to our explanation, we show in \Cref{chart:read_hot} that the throughput of all the systems, except {\foccs}, decreases at a similar rate. 
The throughput of {\foccs} is greater compared to Fabric and {\foccl} when 50\% of the read requests are on the hot accounts. 
This is because {\foccs} imposes a more stringent condition for serializability compared to Fabric and {\foccl}.
{\foccs} aborts transactions if they are forming two consecutive read-write conflicts with at least one \textit{anti-rw}, while the other systems, except {\fabricSharp}, abort immediately when there is a single \textit{anti-rw}. 
Hence, {\foccs} can recover more serializable transactions especially under heavy read-write contention. 
\Cref{chart:read_hot} (right) shows the processing latency breakdown for an incoming transaction.
As expected, the reachability update on the dependency graph takes the largest proportion of the delay in {\fabricSharp}, as all the reachable transactions from the incoming one must be traversed.
This overhead increases with more dependencies in the workload. 
\revision{
Compared to {\fabricSharp}, the transaction processing delay in {\foccs} and {\fabricPlusplus} is almost negligible.
However, {\foccs} takes a bit longer than {\fabricPlusplus}, as it needs to additionally identify conflicted transactions, instead of only indexing the transactions based on the accessed records as in {\fabricPlusplus}. 
}

\begin{figure}[tp]
	\centering
    \begin{subfigure}{0.40\textwidth}
      \includegraphics[width=0.8\textwidth]{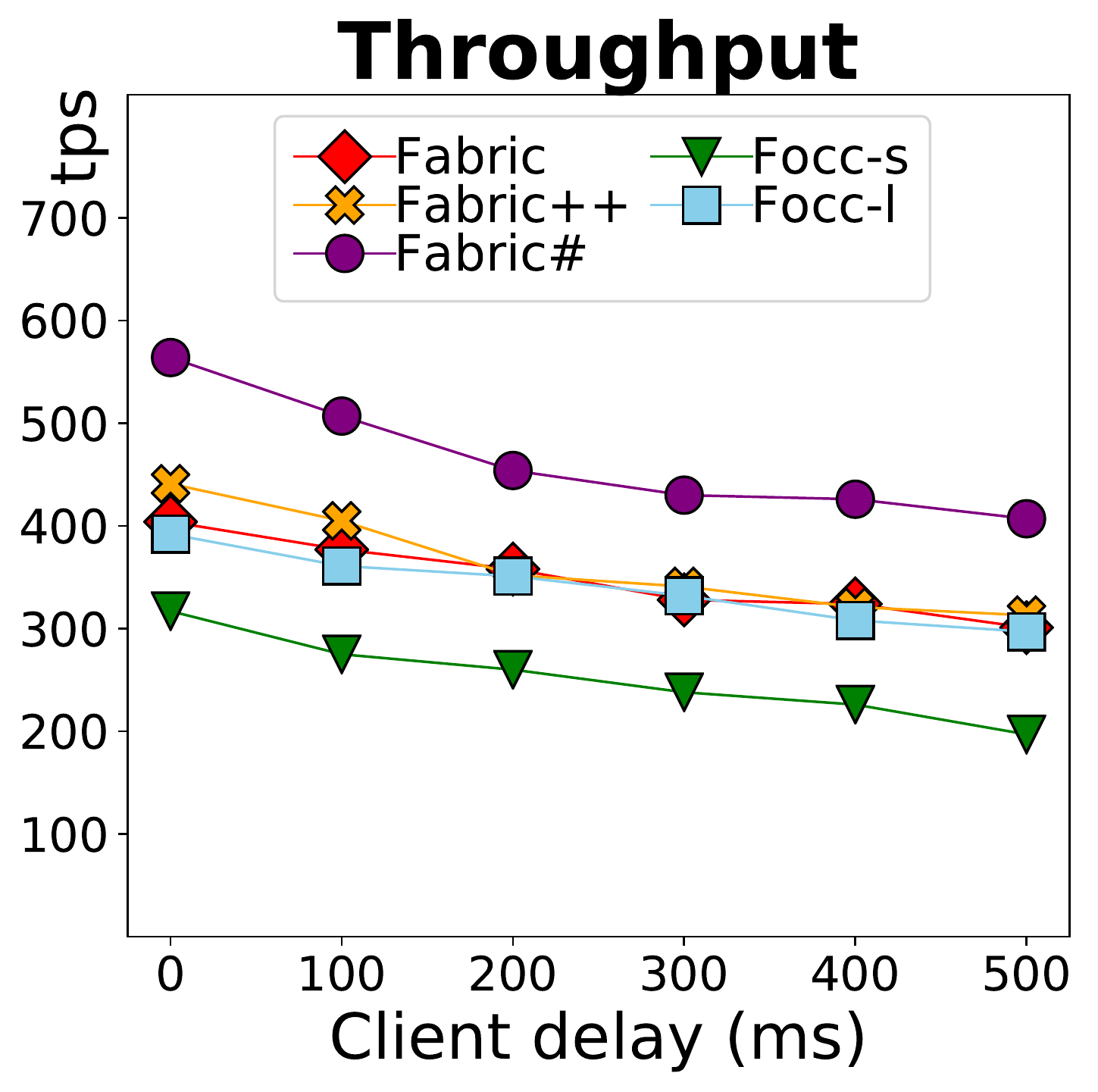}
    \end{subfigure}
    \begin{subfigure}{0.40\textwidth}
      \includegraphics[width=0.8\textwidth]{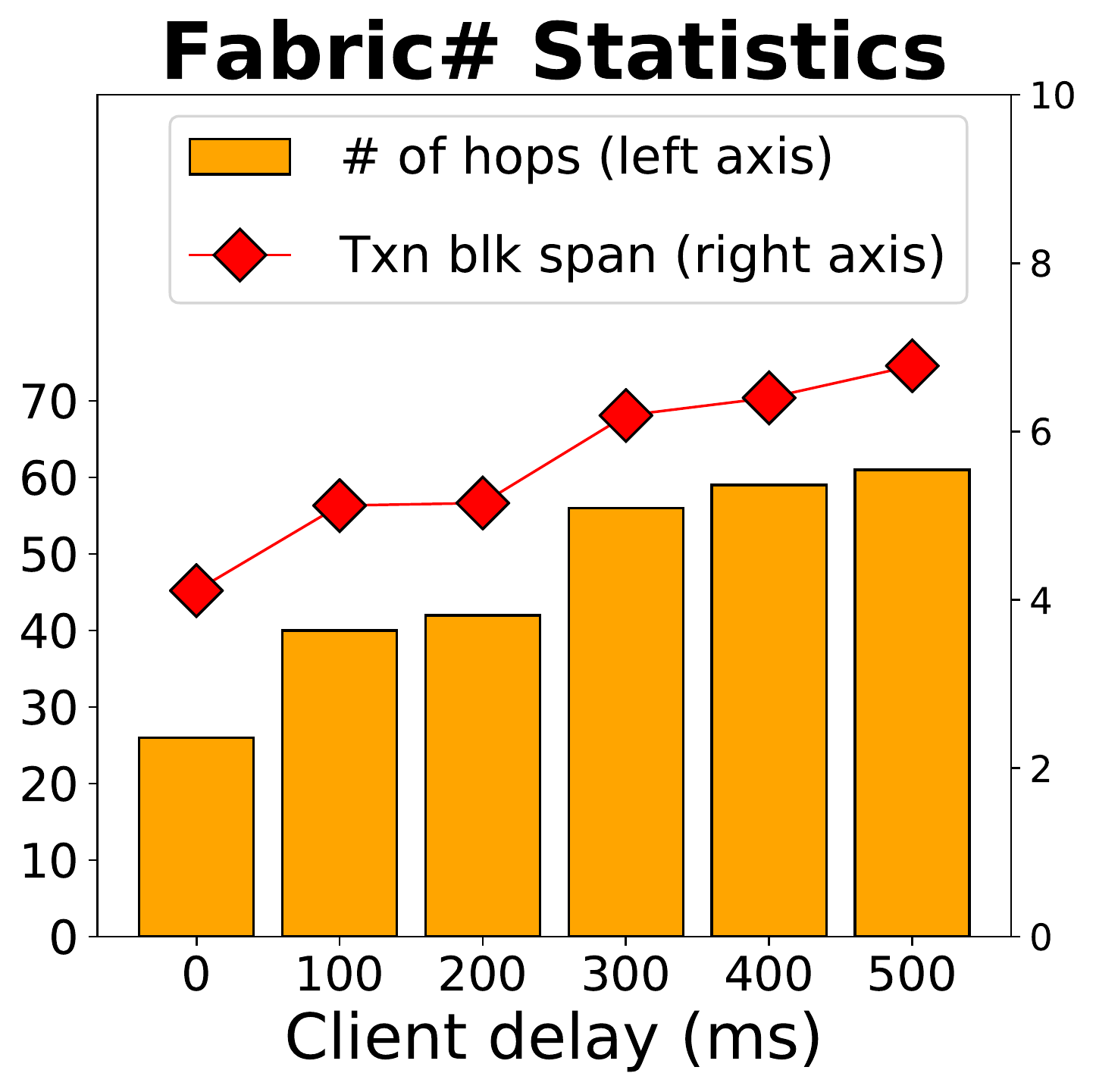}
    \end{subfigure}

    \caption{Throughput of all systems (left) and statistics of {\fabricSharp} (right) under varying client delay}
    \label{chart:client_delay}

\end{figure}
\textbf{Client Delay.}
Next, we simulate the network transmission latency at the client side, in order to study its impact on a transaction's end-to-end processing.
Using the \textit{client delay} parameter, we introduce a delay between the Execution and Ordering phases.
As expected, a longer client delay increases the end-to-end latency and the block span of a transaction.
In turn, this leads to lower throughput, as show in \Cref{chart:client_delay} (left).
Moreover, a larger block span leads to more concurrent transactions and more dependencies.
As shown in \Cref{chart:client_delay} (right), {\fabricSharp} traverses more transactions in the dependency graph to update their reachability (\Cref{alg:txn}) when the client delay is higher.
Despite this, {\fabricSharp} performs better than all the other systems.


\begin{figure}[tp]
	\centering
    \begin{subfigure}{0.40\textwidth}
      \includegraphics[width=0.8\textwidth]{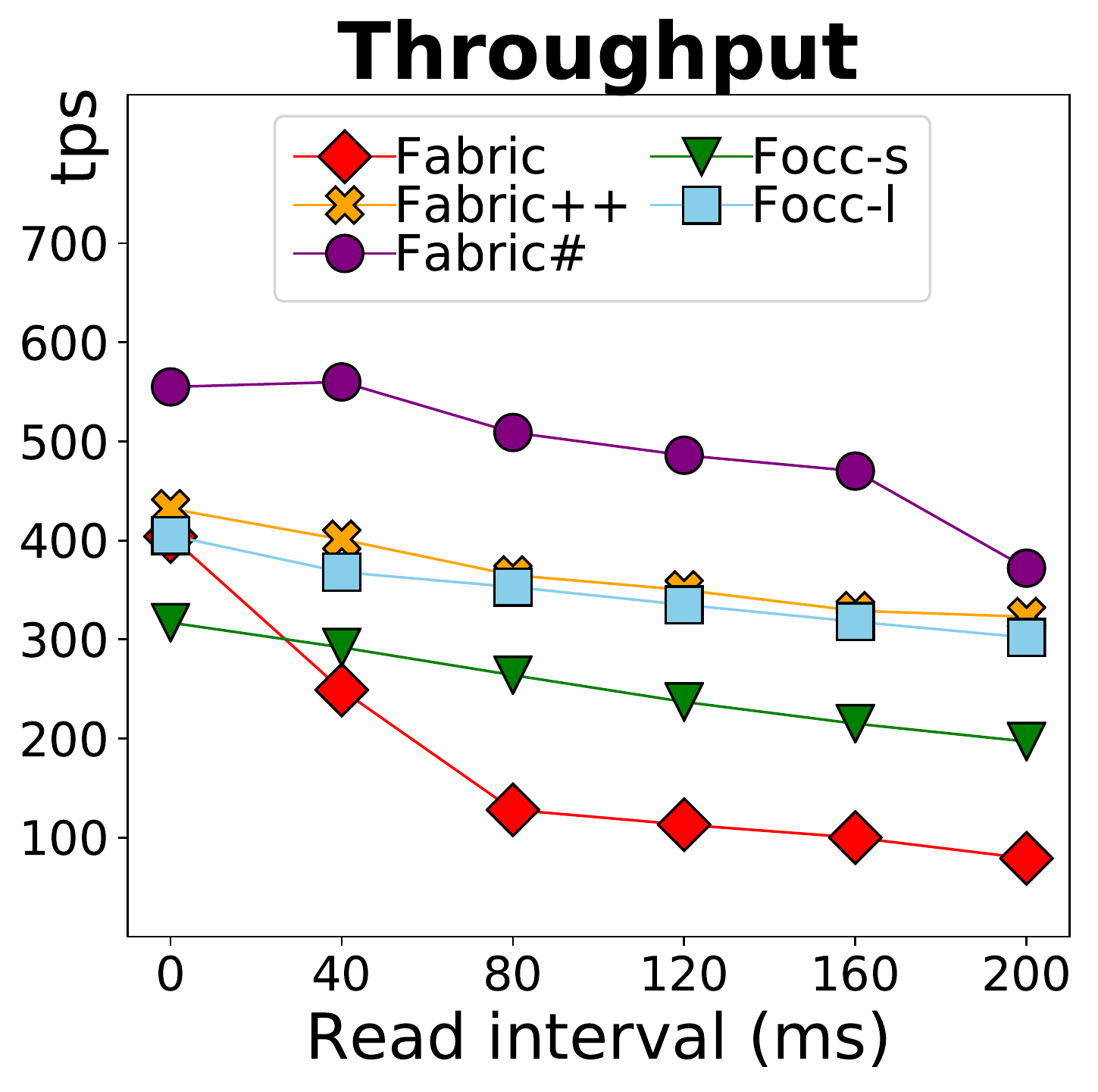}
    \end{subfigure}
    \begin{subfigure}{0.40\textwidth}
      \includegraphics[width=0.8\textwidth]{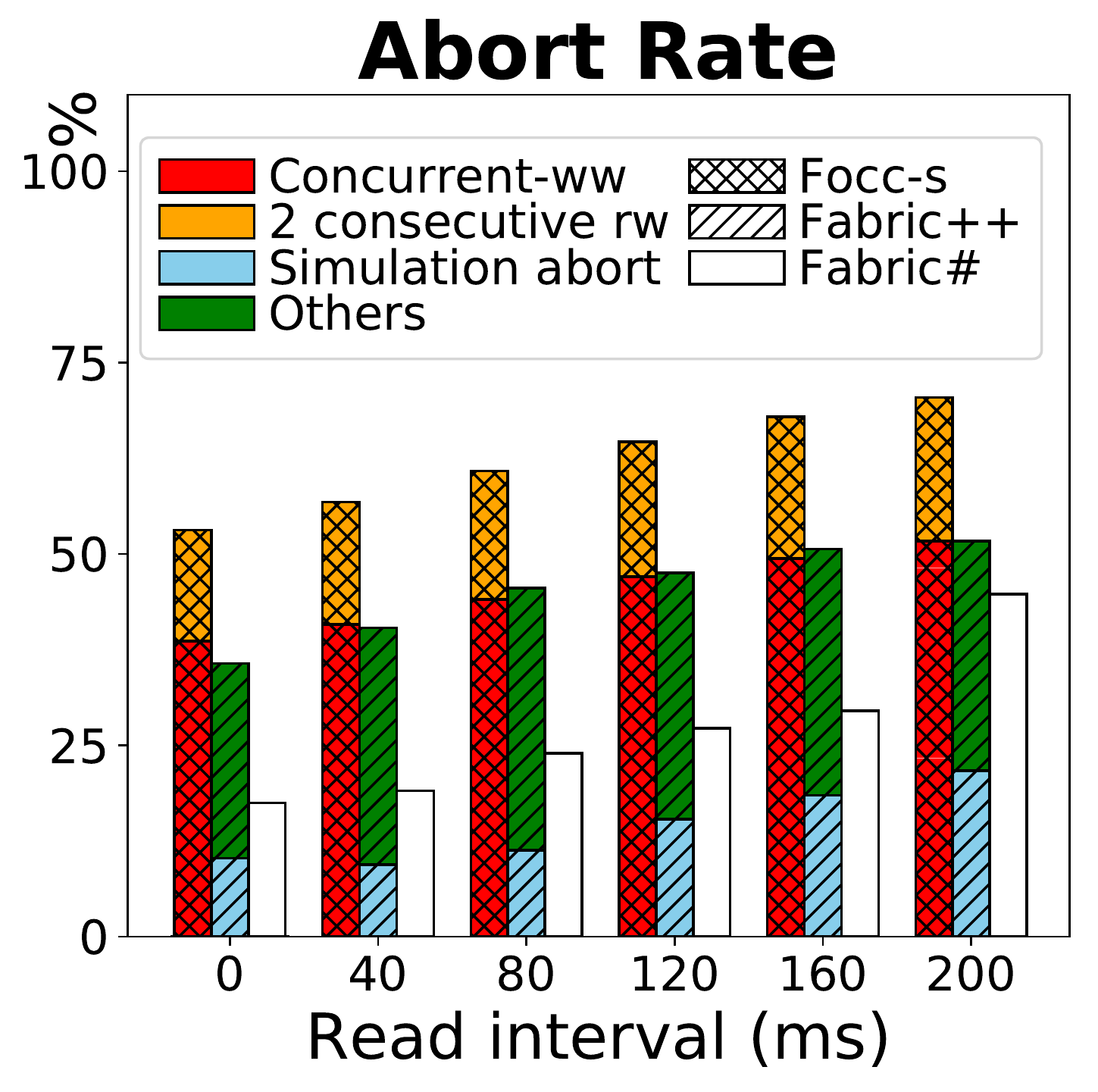}
    \end{subfigure}

    \caption{Throughput (left) and abort rate (right) under varying read interval}
    \label{chart:read_interval}

\vspace{-10pt}
\end{figure}

\textbf{Read Interval.}
To simulate the scenario where a transaction incurs heavy computations, we increase the interval between consecutive reads during the transaction execution.
When a transaction takes longer to execute, there is a higher probability to read across blocks.
{\fabricPlusplus} prevents this scenario by aborting transactions that read across blocks even though some may be serializable. 
This is evidenced by a larger proportion of transactions that are early aborted during the Execution phase, as shown in \Cref{chart:read_interval} (right).
{\foccs} consistently reads from a valid block snapshot and, hence, the effect of extending a transaction's execution results in a higher end-to-end latency.
This leads to more concurrent transactions which, in turn, results in a higher abort rate for both \textit{c-ww} and the dangerous pattern in {\foccs}.
Notably, the performance of the vanilla Fabric drops drastically with longer transaction execution.
We attribute this to the read-write lock used during the simulation and block commit which prevents parallelism.

\revision{
\subsection{The Performance of {\ffabricSharp}}

Next, we analyze the performance of our approach on top of FastFabric~\cite{gorenflo2019fastfabric}. 
FastFabric separates peers in the same administrative domain into endorser, storage, and validator nodes. 
Endorsers, storage nodes, and validators are solely responsible for the transaction execution, block persistence, and transactions validation, respectively.
These optimizations enable FastFabric to obtain a speedup of 6 compared to vanilla Fabric, as reported in ~\cite{gorenflo2019fastfabric}.

We implement our reordering techniques on top of FastFabric and name the resulting system {\ffabricSharp}.
In our experimental setup of four peers, we set two peers as endorsers, one as storage, and one as validator.
%
We use two workloads based on the original Smallbank to evaluate the effectiveness of our approach when the workloads exhibit less conflicts compared to the modified Smallbank used in the previous set of experiments.
The first workload consists of uniform update-only transactions (\textit{Create Account}), which are all serializable due to the absence of read operations and hence \textit{anti-rw}. 
The second workload is a mix of 50\% read-only transactions (\textit{Query Account}), 30\% transactions that modify a single account (\textit{Deposit Checking, Write Check, Transaction Saving}), and 20\% transactions that modify two accounts (\textit{Send Payment, Amalgamate}). 
The skewness of the accessed accounts is controlled by the zipfian parameter $\theta$. 

\begin{figure}[tp]

  \centering     
  \includegraphics[width=0.55\textwidth]{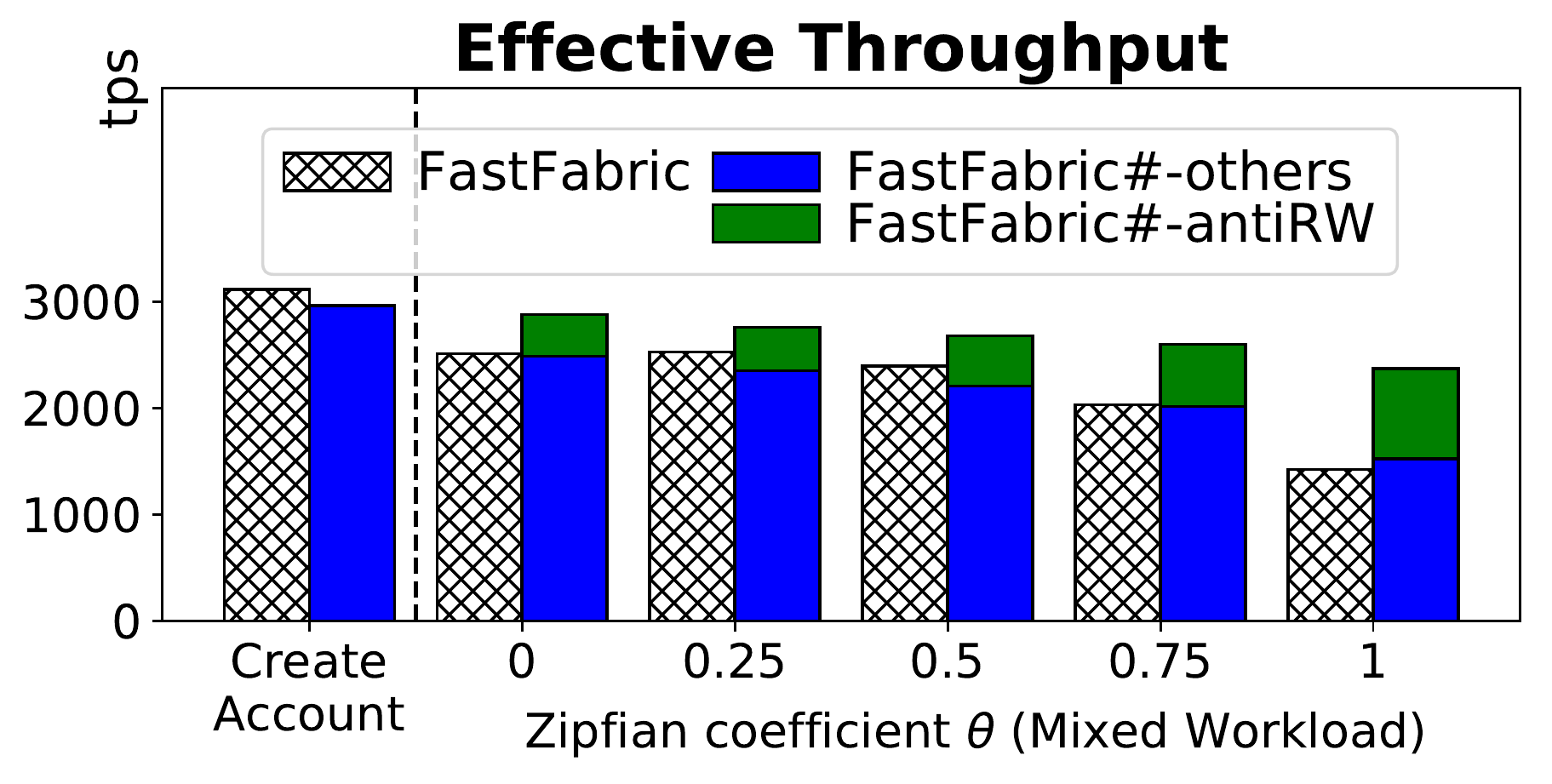}

  \caption{Effective throughput of FastFabric and {\ffabricSharp} under the contention-free (\textit{Create Account}) and mixed workload}
  \label{fig:fastfabricx}

	\vspace{-10pt}
\end{figure}

\Cref{fig:fastfabricx} reports the comparison between FastFabric and {\ffabricSharp}. 
With the contention-free \textit{Create Account} workload, FastFabric achieves a speedup of $4.5$ on our experimental setup compared to the original Fabric, namely 3114 tps versus 677 tps in terms of raw throughput.
The reordering overhead in {\ffabricSharp} is less than 5\% and its effective throughput achieves 2960 tps. 
Under the mixed workload, {\ffabricSharp} achieves higher throughput compared to FastFabric.
Moreover, the gap between {\ffabricSharp} and FastFabric grows with increasing skewness. 
When the $\theta=1$, {\ffabricSharp} can achieve 2370 tps, 66\% more than the 1424 tps of FastFabric. 
The gain of {\ffabricSharp} is mostly from the serialized transactions with \textit{anti-rw}, as highlighted in~\Cref{fig:fastfabricx}, which are all aborted by FastFabric.

We conclude that the benefits of our reordering outweigh the computation overhead, even under a heavy load. 
From previous experiments, this overhead is mostly due to reachability update, which depends on the transaction complexity. 
In practice, if our reordering becomes the bottleneck, we could simply adopt a discriminated approach.
For example, complex transactions with more referenced records can be checked with a simple serializability condition, e.g., the existence of \textit{anti-rw} conflicts.
On the other hand, simple transactions are passed to the fine-grained reordering. 
}



\section{Related Work}
\label{sec:related}

\textbf{Concurrency Control in Databases.}
Modern hardware with large memory and multi-core architecture opens up new
opportunities for redesigning OLTP RDBMSs and their concurrency control
\cite{kallman2008h}.
Various works have attempted to either make data access more cache-friendly
\cite{yu2018sundial, tu2013speedy} or extract more execution parallelism
\cite{yao2016exploiting, wu2016transaction}.
Others strove to streamline the transaction ordering based on workload
characteristics and application requirements \cite{guo2019adaptive,
  ding2018improving, wang2016mostly}.
Our work is closer to the latter approach, but we focus on blockchain systems.
In particular, we learn from the standardized transactional analysis techniques
in databases and propose a novel technique to efficiently reduce the number of
aborted transactions in EOV blockchains.

\textbf{Concurrency Control in Blockchains.}
The serial execution of smart contracts has long been the only solution adopted
by blockchains due to ease of reasoning.
Fabric, with its novel EOV architecture, is the pioneer system which formally
introduced concurrency management into blockchains.
It sparked a series of related optimizations, such as {\fabricPlusplus}
\cite{sharma2019blurring}, and enhanced architectures, like OXII
\cite{amiri2019parblockchain}.
{\fabricPlusplus} is the closest work to ours. 
It also aims to reduce the number of aborted transactions under EOV architecture.
However, our approach provides a more fine-grained concurrency control which can still serialize transactions aborted by {\fabricPlusplus}.

\textbf{Bridging the Gap between Database and Blockchain Transactions.}
Similarities between blockchains and databases have long been observed in their surveys~\cite{ruan2019blockchains, dinh2018untangling} and benchmarks~\cite{dinh2017blockbench}. 
There are several works addressing the atomicity of cross-chain or cross-shard
transactions in blockchains, by transitioning the well-established database
techniques, such as the classic Two Phase Commit \cite{herlihy2019cross,
  zakhary2019atomic, dang2019towards}.
Nathan et al. propose to implement blockchains on top of databases to
rely on their store procedures for richer contract expressiveness \cite{Nathan}.
Smart contract, as a key enabler for the transactional workload in blockchains,
is extensively optimized for better utility with lineage information
\cite{ruan2019fine} or confidentiality across applications
\cite{amiri2019caper}.

\section{Conclusions}
\label{sec:conclusion}

We propose a novel solution to efficiently reduce the transaction
abort rate in EOV blockchains by applying transactional analysis from
OCC databases.
We first draw a theoretical parallelism between EOV blockchains and
OCC databases.
Then, we propose a fine-grained concurrency control method and implement it in
{\fabricSharp} and {\ffabricSharp} based on Fabric and FastFabric, respectively.
\revision{Our experimental analysis shows that both {\fabricSharp} and {\ffabricSharp} outperform other blockchain systems, including the vanilla Fabric,
{\fabricPlusplus}, and FastFabric.
Unlike databases that achieve high throughput, the blockchains' limited throughput due to factors related to security opens up opportunities for precise transaction management.
}


\clearpage
\section{Acknowledgements}
This research is supported by Singapore Ministry of Education Academic Research Fund Tier 3 under MOE's official grant number MOE2017-T3-1-007.

\appendix

\bibliographystyle{abbrv}
\bibliography{main}

\end{document}